\newtheorem{thm}{Theorem}
\newtheorem{cor}{Corollary}
\newtheorem{lem}{Lemma}
\newtheorem{defn}{Definition}
\newtheorem{rem}{Remark}
\newtheorem{prob}{Problem}
 \def\margine#1{\strut\vadjust{\kern-\strutdepth\specialstar{#1}}}
 \def\strutdepth{\dp\strutbox}
 \def\specialstar#1{\vtop to \strutdepth{\baselineskip\strutdepth\vss\llap{#1}\null}}
\newcommand{\Ng}[1]{{N_g(#1)}}
\newcommand{\Z}{{\mathbb Z}}
\newcommand{\ZnZ}[1]{\Z/#1\Z}
\begin{document}

\title{Distributed Wake-Up Scheduling for Energy Saving in Wireless Networks}
\author{Francesco De Pellegrini, Karina Gomez, Daniele Miorandi and Imrich Chlamtac\thanks{The authors are with CREATE-NET,  via alla Cascata 56/D, 38123 -- Povo, Trento, IT.}}

\maketitle


\begin{abstract}
A customary solution to reduce the energy consumption of wireless communication
devices is to periodically put the radio into low-power sleep mode. A
relevant problem is to 
schedule the wake-up of nodes in such a way as to ensure proper coordination among
devices, respecting delay constraints while still saving energy. In
this paper, we introduce a simple algebraic characterisation of the 
problem of periodic wake-up scheduling under both energy consumption 
and delay constraints. We demonstrate that the general problem of 
wake-up times coordination is equivalent to integer factorization 
and discuss the implications on the design of efficient scheduling 
algorithms. We then propose simple polynomial time heuristic algorithms that 
can be implemented in a distributed fashion and present a message 
complexity of the order of the number of links in the network.
 
Numerical results are provided in order to assess the performance of
the proposed techniques when applied to wireless sensor networks.

\end{abstract}


\begin{keywords}
Wireless Networks, Energy Saving, Wake-up Scheduling, Chinese Remainder Theorem
\end{keywords}


\section{Introduction}

In wireless networks with battery-operated devices, energy saving mechanisms are of paramount importance in order to maximize 
network lifetime. 
Measurements have shown that the power consumption associated to the reception of packets is of the same order of magnitude as of that involved in the packets transmission 
~\cite{Feeney:infocom01,SMAC,WiseMAC}. 
Even worse, due to the power drainage 
operated by the radio frequency (RF) amplifier, the energy expenditure for passive communication operations -- e.g., 
overhearing the channel for collision avoidance mechanisms -- has a very high energy cost as well. 
As a consequence, power saving mechanisms 
 typically force low-power consumption modes for the RF interface, which in turn inhibits the communications capabilities (transmission, reception and channel sensing) of wireless devices for a certain fraction of time. The typical statement to this respect is that nodes 
are put into {\em sleep} mode.  

A customary solution to increase network lifetime is therefore to periodically put nodes into sleep mode. This corresponds to introducing a duty cycle, intertwining sleep and active periods at the device level. 
Such approach is already in use in a variety of wireless technologies, 
including power-save modes and beaconing techniques in IEEE802.11 
~\cite{Tseng} and wake-up cycling methods in wireless sensor networks
~\cite{BMAC}.

The 
introduction of wake-up scheduling has been identified in the past as a major source of performance degradation in wireless networks.
Nodes going in sleep mode increase indeed the latency associated to the delivery of messages. This has been subject to a number of 
research studies, in particular 
showing that 
duty cycles may lead to a data forwarding interruption problem in wireless sensor networks~\cite{SMAC,DMAC}. 

The problem of scheduling wake-up times may appear at first sight a standard optimization problem: one tries to maximize the wake-up periods while satisfying
a given delay bound~\cite{GuhaGW10,LuSKG05} in order to minimize the associated energy consumption. 
However, in this trade-off there exists a further element of complexity since the scheduling of activity periods 
requires coordination at the network level. Indeed, nodes in sleep mode are not able to receive incoming messages, 
so that it is necessary that both transmitting and receiving nodes become active at the same time. This problem is 
often referred to as scheduling of rendezvous points among neighboring nodes. 
While it can be regarded as a synchronization issue, it is inherently different from clock synchronization~\cite{Li_Rus,Mirollo90}, 
where one typically aims at synchronizing devices' clocks network-wide using minimal local message exchange. 

A recent research line 
has addressed the design of quorum-based protocols for the power saving duty cycles
~\cite{Tseng,Zheng,Wu_aaa,Wu_powawa,Jiang_quorum}. 
In general, wake-up schedules should be robust with respect to loss of synchronization due to clock skew and should account for possibly different energy/delay constrains on various nodes (due to, e.g., different types of battery installed in different devices etc.). Furthermore, protocols for determining the wake-up schedules should be able to adapt to changes in network topology, due to, e.g., node failures or addition of new nodes to the network. In the aforementioned works, the target is the definition of algorithms and protocols able to ensure that two neighboring nodes are able to communicate (i.e., both are active at the same time) at least once within a given (bounded) time-frame. A different approach to reduce energy consumption is adaptive listening~\cite{LuSKG05,BasuBroad06,GuhaGW10}, whereby nodes enter a low-power mode if no activity is detected on the channel for a given time interval.  However, in this approach, no guarantees can be given in terms of delay and/or energy constraints. 


In this work we take a different perspective and aim to answer to the following questions: {\em Assume that energy and delay constraints have been assigned network-wide: do feasible duty-cycle schedules exist?}
{\em If so, what is the related 
energy-delay trade-off? Furthermore, do algorithms exist that achieve network-wide distributed scheduling 
of duty cycles, and scale in the number of required messages?}

Our main contributions can be summarized as follows: 
\begin{itemize}
\item a simple algebraic characterization of the wake-up scheduling problem; 
\item the analysis of the complexity of the wake-up scheduling problem: it is proved equivalent to that of integer factorization; 
\item heuristics and algorithms that solve the wake-up scheduling problem with message complexity linear in the number of links in the network;
\item fully distributed protocols implementing the aforementioned algorithms and test them in a system-level simulator to evaluate their performance.
\end{itemize}


The remainder of the paper is organized as follows. Sec.~\ref{sec:related} reports on related works in the literature. Sec.~\ref{sec:model} introduces the system model and formally define the problem of wake-up scheduling problem. Then, we analyze in Sec.~\ref{sec:complexity} the complexity of the aforementioned problem by relating it to integer factorization while Sec.~\ref{sec:algos} introduces algorithms solving the problem with low messages overhead. In Sec.~\ref{sec:numerical} we evaluate the performance of the 
proposed protocols, obtained by means of system-level simulations. Finally, Sec.~\ref{sec:conclusions} is devoted to final a
conclusions and pointers to promising research directions.



\section{Related Work}\label{sec:related}
In wireless networks, periodic wake-up is a convenient mean to avoid idle listening to 
the channel and to prolong nodes'
lifetime~\cite{BMAC,WiseMAC,SMAC,RMAC}. Typically, the duty cycle
(defined as the fraction of time a node is in the active state) is set
to values of %
the order
of some percent~\cite{SMAC,RMAC}. This is feasible when the data rate is low, i.e., 
on the order few packets per minute, as in low-rate wireless sensor networks.
In order to schedule periodic wake-ups, a simple 
solution, adopted by 
BMAC and other WSN protocols, is to employ preamble sampling~\cite{BMAC,WiseMAC}.
Basically, the preamble is set long enough to guarantee node discovery at each 
wake-up. 

In order to avoid inefficiencies, it has been proposed to match the RF interface 
activation, i.e., the sleep period, to the traffic pattern. This is
the case of SMAC or TMAC protocols~\cite{SMAC,TMAC,RMAC}. In particular 
SMAC synchronizes the duty cycle by coordinating neighbouring sensors to the same 
slot to reduce the time spent listening to the channel. Adaptive change of the duty cycle is
supported in TMAC~\cite{TMAC}, whereas \cite{RMAC} matches duty cycles
to the dynamics of the routing protocol.
Tuning such protocols network-wide is typically not considered; 
scalability issues do indeed arise, due to the need of a potentially
large 
amount of signaling messages. This is the core issue that we address 
in this paper: we design distributed algorithms for wake-up scheduling that
present {\em linear} message complexity in terms of the network size {\em and} 
can operate local adaptation in case of nodes joining/leaving in a
fully distributed  
fashion.

In ad hoc networks literature, centralized planning of wake-up schedules 
was addressed in several works. Solutions that are close to the ones 
proposed in this paper can be found in~\cite{Tseng,Zheng,Wu_aaa,Wu_powawa,Jiang_quorum}. 
In~\cite{Tseng} the target is the design of power saving protocols for
WiFi networks,  
using {\em quorums} design and leveraging the power-save mode provided by the IEEE 802.11
standard. In \cite{Zheng} methods for designing asynchronous and
heterogeneous wake-up schedules are presented, 
where the heterogeneity may come from either 
different applications running or by different node-level
features. 
Along the same 
line,~\cite{Wu_aaa} and~\cite{Chou} provide solutions for 
creating wake-up schedules whereby the ability of neighbouring nodes to communicate
(i.e., to be active in the same time interval) within a wake-up cycle 
is ensured by combinatorial means. 
The complexity of quorum design limits the 
applicability of such combinatorial technique to regular or simple topologies;
further 
insight into the design of quorums is found in \cite{Lai_OPODIS}. 

Some other works address the issue of network performance in the
presence of duty cycles
~\cite{LuSKG05,
GuhaGW10}. The goal of such works 
is to maximize the throughput and minimize the delay, while meeting a
given average power consumption constraint. It is shown that the
problem is, in general, 
NP-complete; an analysis is provided for regular topologies (line,
grid and tree networks in~\cite{GuhaGW10}, tree and ring networks
in~\cite{LuSKG05}). The distributed heuristic provided
in~\cite{LuSKG05} does not provide guarantees in terms of delay
bounds. Centralized approaches only are considered in~\cite{GuhaGW10}.

In~\cite{Lu2007} the authors address 
the joint design of 
routing and wake-up scheduling protocols. 
The idea is that routing should be aware of the wake-up schedules and
vice versa. A distributed implementation (based on the use of
distributed versions of Bellman-Ford algorithm) is also possible. The
need to coordinate routing and wake-up scheduling may create issues
related to the amount of signaling messages to be exchanged in
rapidly changing topologies. 

\subsection*{Main contributions} 
In this work we expose the equivalence of the wake-up scheduling problem 
to integer factorization. We observe that this follows naturally as soon as 
one imposes constraints on both delay and energy consumptions. Using basic 
algebraic tools~\cite{koblitz} we provide a framework that relies on a very 
limited set of input parameters, providing means to map energy consumption 
constraints and end-to-end delay requirements to wake-up scheduling. 

We further provide fully distributed heuristic algorithms that 
solve the wake-up scheduling problems with a message complexity $O(M)$
messages, where $M$ is the number of links in the network. The
proposed schemes are validated numerically using system-level 
simulations.



\section{System Model}\label{sec:model}
\begin{table}[t]\caption{Notation used throughout the paper}
\centering
\begin{tabular}{|p{0.10\columnwidth}|p{0.8\columnwidth}|}
\hline
{\it Symbol} & {\it Definition}\\
\hline
$V$,$N$ & set of nodes, number of nodes\\
$E$,$M$ & set of links, number of links\\
$d_i$ & degree of node $i$\\
$\Ng{i}$ & neighbors of node $i$\\
$i_j$ & $j$-th neighbor of node $i$\\
$T_0$ & time slot duration\\
$n_i$ & wake-up period at node $i$\\
$\alpha_i$ & start epoch of wake-up cycle at node $i$\\ 
$a_{ij}$ & phase of node $i$ towards node $j$\\
$A_i$ & set of active slots for node $i$ within a period \\
$x^{n}_{ij}$ & $n$--th rendezvous time for nodes $i$ and $j$\\
$(r,s)$& greatest common divider of $r,s\in \Z$\\
$[r,s]$& least common multiplier of $r,s\in \Z$\\
\hline
\end{tabular}
\label{tab:notation}
\end{table}

We consider a multi-hop wireless network represented as a graph $G(V,E)$ where $G$ is a set of 
$N$ vertices corresponding to nodes $j=1,\ldots,N$, and $E$ is a set of $M$ edges. Edge $ij$ 
connects vertices $j$ and $i$ if the two nodes are within mutual communication range. We 
assume a discrete time model and denote by $T_0$ the time slot duration. 

In the following we will assume $T_0=1$ for notation's sake. We consider 
a common time axis, and denote the origin by $0$. Nodes do not need to be 
aware of such common time axis: we will use it only for presentation purposes (see 
Sec.\ref{sec:sub:local}).

$\Ng{i}$ is the set of neighbors of node $i$; we further
assume that nodes are synchronized on time slot boundaries.

We now define a periodic wake-up schedule. To illustrate the concept
and notation, we refer to the simple example reported in Fig.~\ref{fig:concept}. 
Two neighboring nodes, namely node $1$ and node $2$, wake-up
with {\em period} $n_1=5$ and $n_2=3$, respectively. This means
that they wake every $n_1$ (respectively: $n_2$) slots. However, their
wake-up cycles can start from a different ``origin''. In the example,
node $1$ period starts at slot $1$ and node $2$ period at slot $2$,
respectively. In particular, let $\alpha_i$ be the local time 
origin for node $i$: in the example, the two nodes wake up at time 
$0$ with respect to their local time, but at time $1$ and $2$ 
with respect to the common time. 

We want to characterize the conditions under which for every 
pair of nodes in the network, a wake-up schedule provides a set of times when nodes are both awake 
and can communicate; those times are called also {\em rendezvous} times. 
 
Nodes may have {\it multiple} active slots within a wake-up period. We denote by 
$A_i$ the set of such active slots for node $i$. In other words, node $i$ follows an activation
cycle with period $n_i$, whereby the active slots are characterized by
the set $A_i$. 
We call an element of $A_i$ a {\it phase}.

Formally, we define a periodic {\em wake-up schedule for node $i$} as a function $S(i)=(A_i,n_i)$,
 where $(A_i,n_i)$ is a pair given by the wake-up epoch set
 $A_i\subseteq \{0,1,\ldots,n_i-1\}$  
and the wake-up period $n_i\in \Z$. Under a wake-up schedule, node $i$ will wake up at times
\[
A_i+n_i \Z :=\{a+k \cdot n_i,k \in \Z,a \in A_i \}
\]
We say that node $i$ performs a periodic wake-up with period $n_i$ and
epochs in $A_i$. With a slight abuse of terminology, we will call a
{\it wake-up schedule} the set of functions $S=\{S_i\}_{i=1,\ldots,N}$; also, 
 for ease of reading, we will denote $a_{ij}\in A_i$ the phase that node $i$ uses to communicate 
with node $j$.

Denote $X=\{x_{ij}^{n}\}$ the set of the {\em meeting times (or: rendezvous points)} for nodes 
$i$ and $j$, i.e., the set of time slots when node $i$ and $j$ are both active: 
 $X=\left(A_i+n_i \Z \right) \cap\left( A_j+n_j \Z \right)$. As it will be clear 
in the following, only two cases are possible: $X=\emptyset$ or $|X|=\infty$, i.e., 
either there are no meeting times or there is an infinite number thereof. Quest for {\em periodicity}
 is the reason for such dichotomy.

 \begin{figure}[t]
\centering
\includegraphics[width=9cm]{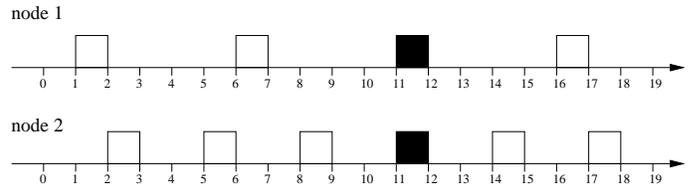}%
\caption{Wake-up schedule example: $\alpha_{1}=1$, $n_1=5$, $\alpha_{2}=2$  and $n_2=3$; 
nodes meet every $15$ slots starting at slot $11$.}\label{fig:concept}
\end{figure}

We consider first energy constraints. Assuming that the energy
expenditure is proportional to the fraction of time a node is in the
active state, one wants to prolong the lifetime of node $i$ by letting
the {\em duty cycle} $|A_i|/n_i$ satisfy constraints of the form:
\begin{equation}\label{eq:constr_lower}
n_i \geq L_i \cdot |A_i|, \; i=1,\ldots,N
\end{equation}
We consider the possibility of nodes having different energy
constraints (i.e., different values for $L_i$). This may reflect,
e.g., heterogeneity in the battery capacity. 

Further, depending on the specific applications running on top of the 
network, specific requirements for the end-to-end delay traffic may 
have to be satisfied. For instance, the delay bound may represent the 
maximum latency that is allowed for a certain sensor reading.

In the remainder of the paper, we assume that constraints of such a type
are given in
terms of the maximum delay between two subsequent rendezvous between
node $i$ and {\it any} of its neighbors $j$. We denote such
constraint by $U_i$; the constraint can then be written as: 
\begin{equation}\label{eq:constr_upper}
x_{ij}^{n+1}-x_{ij}^{n} \leq U_{i}, \; \forall j\in N_g(i).
\end{equation}

It is important to notice here local constraints $L_i$ and $U_i$ are input 
to the problem. In App.~\ref{app:boundsdym}, we show how end-to-end delay constraints can be
turned into constraints of the form (\ref{eq:constr_upper}), while
accounting at the same time for the presence of energy constraints as 
in (\ref{eq:constr_lower}). 

\begin{defn}
A wake-up schedule $S$ is said to be {\it feasible} if for
every node $i$, the number of rendezvous points with node $j$, where $j \in N_g(i)$ 
is infinite and if it respects the energy constraints (\ref{eq:constr_lower}).
\end{defn}
\begin{defn}
A wake-up schedule is said to be {\em tight} if it is feasible and if
it further respects 
the constraints (\ref{eq:constr_upper}) .
\end{defn}
\begin{defn}
A wake-up schedule is said to be  {\em strictly feasible
  (respectively: tight)} if it is feasible (respectively: tight)
and $|A_i|=1$ for all $i=1,\ldots,N$. 
\end{defn}
The wake-up scheduling problem can be formalized as: 
\begin{prob}[{\tt Strong WAKE-UP}]\label{prob:wakeup_strong}
Given graph $G=(V,E)$ and constraints (\ref{eq:constr_upper}) and  (\ref{eq:constr_lower}), determine 
a (strictly) tight wake-up schedule.
\end{prob}
A trivial necessary condition is that $U_i\geq L+1=1+ \max \{L_{j_1},\ldots,L_{j_n}\}, \quad
\forall ij_k \in E$, which we assume holds in the remainder of the paper.

If a wake-up schedule is feasible but not tight, some of the constraints (\ref{eq:constr_upper}) are not satisfied. 
This origins a weaker problem that is 
\begin{prob}[{\tt Weak WAKE-UP}]\label{prob:wakeup_weak}
Given graph $G=(V,E)$ and constraints (\ref{eq:constr_upper}) and  (\ref{eq:constr_lower}), determine 
a (strictly) feasible wake-up schedule.
\end{prob}

We denote as {\em violation} a condition such that constraints
(\ref{eq:constr_upper}) are not satisfied and yet a wake-up schedule is feasible. 

\subsection{Key Assumptions}
We summarize here the main assumptions that we will use in the
remainder of the paper:
\begin{itemize}
\item Nodes do not have global knowledge of the common time axis;
\item Nodes do not have global knowledge of the $L_i$ and $U_i$
values assigned to other nodes in the network;
\item Energy consumption is directly proportional to the duty cycle. This implies that 
we neglect non-linear effects and energy costs associated to wake-up operations;
\item Neighboring nodes are synchronized on common time slot boundaries \cite{SMAC}.
\end{itemize}
We remark that our results apply to general networks; however, due to the customary need to prolong battery 
lifetime, wireless sensor networks will be used as our reference case throughout the paper.



\section{Characterization of the Wake--Up Problem}\label{sec:complexity}
In this section we provide an algebraic characterization of the WAKE-UP problem. 
Using standard notation, we denote by $(r,s)$ and by $[r,s]$ the
greatest common divider (gcd) and the least common multiplier (lcm) of $r,s \in \Z$, 
respectively\footnote{We use notation $[r,s]$ to also denote the interval of the integers $u$ 
such that $r\leq u \leq s$; the meaning of the notation is clear from the context.}. Integer $r$ divides $s$, which we write
as $r|s$, if $s=r\cdot v$ for some integer $v$.

We first recall a fundamental result:
\begin{thm} (Chinese Remainder theorem \cite{koblitz})
Let $n_1,n_2,a_1,a_2 \in \mathbb{Z}$, consider the system 
\begin{equation}\label{eq:chirem}
\begin{cases}
x= a_1 \mod n_1 \\
x= a_2 \mod n_2
\end{cases}  
\end{equation}
Then, 
\begin{enumerate}
\item[(i)] system (\ref{eq:chirem}) has a solution if and only if $(n_1,n_2)|a_1-a_2$;
\item[(iii)] the general solution of (\ref{eq:chirem})  is given by $x=x_0 \mod [n_1,n_2]$, where $x_0$ is 
a particular solution of (\ref{eq:chirem}).
\end{enumerate}
\end{thm}

In order to derive a particular solution $x_0$ at step (ii), the extended Euclidean algorithm~\cite{koblitz} 
guarantees that integer $u_1,u_2$ exist that solve the following B\'ezout's identity 
\[
 n_1 u_1  - n_2 u_2 = (n_1,n_2)
\]
from which 
\[
x_0=a_1-n_1 \frac{u_1 (a_1-a_2) }{(n_1,n_2)}=a_2 - n_2 \frac{u_2 (a_1-a_2) }{(n_1,n_2)}
\]
The extended Euclidean algorithm has complexity $O(\log^3 \max \{n_1,n_2\})=O(1)$. It is clear that when system 
(\ref{eq:chirem}) consists of $k>2$ congruences, it can be solved 
iteratively; the solution can be determined repeating the operation for a single pair 
$k-1$ times. The time complexity for one step is $O(1)$, which in turn becomes on 
the order of $O(k)$. 

From the model introduced in Sec.~\ref{sec:model}, the solution 
of the {\tt WAKE-UP} problem has to satisfy the following relation:
\begin{equation}\label{eq:sol}
a_{ij} + h \cdot n_i = a_{ji} + k \cdot n_j,
\end{equation} 
for $ij \in E$ and $a_{ij}\in A_i$ and $a_{ji}\in A_j$, where $h,k \in
\mathbb{Z}$. The sequence of rendezvous times
$\{x_{ij}^n\}_{n\in\mathbb{Z}}$ can therefore be seen as solution of
(\ref{eq:chirem}), with $a_1=a_{ij}$, $n_1=n_i$, $a_2=a_{ji}$ and $n_2=n_j$.

Existence of feasible schedules follows as a consequence of the Chinese Remainder theorem. 

\begin{thm}\label{cor:nobound}
A strictly feasible wake-up schedule always exists, and can be
constructed 
in time $O(N)$ 
with a fully distributed asynchronous algorithm requiring $O(M)$ messages. 
\end{thm}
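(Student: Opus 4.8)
The plan is to turn strict feasibility into a purely local, per-edge divisibility condition via the Chinese Remainder theorem, then exhibit a construction that satisfies that condition edge by edge with no global coordination, and finally read off the complexity from the structure of the construction.

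First I would use that $|A_i|=1$ to write $A_i=\{a_i\}$, so that the energy constraint (\ref{eq:constr_lower}) collapses to $n_i\ge L_i$ and the rendezvous set on an edge $ij$ is exactly the solution set of the system (\ref{eq:chirem}) with $a_1=a_i$, $n_1=n_i$, $a_2=a_j$, $n_2=n_j$. By part (i) this set is nonempty iff $(n_i,n_j)\mid(a_i-a_j)$, and by part (iii) it is then the full coset $x_0+[n_i,n_j]\Z$, hence infinite (which is also the source of the empty-or-infinite dichotomy claimed in Sec.~\ref{sec:model}). Thus a strictly feasible schedule is precisely an assignment of periods $n_i\ge L_i$ and single phases $a_i$ with $(n_i,n_j)\mid(a_i-a_j)$ on every $ij\in E$. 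Existence is then immediate in the common-time model: choose any $n_i\ge L_i$ and set $a_i\equiv t^{*}\bmod n_i$ for a single common slot $t^{*}$; since $(n_i,n_j)\mid n_i$ we get $a_i\equiv t^{*}\equiv a_j\pmod{(n_i,n_j)}$ on every edge, and part (iii) upgrades the common slot to infinitely many rendezvous.

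The substance is the distributed construction, which must work although a node knows neither the common time axis nor the $L_j$ of its neighbors, and although with $|A_i|=1$ a single phase $a_i$ must serve all edges incident to $i$ at once. The idea I would exploit is to make the divisibility condition \emph{insensitive} to phases: if periods along every edge are coprime, i.e. $(n_i,n_j)=1$, then $(n_i,n_j)\mid(a_i-a_j)$ holds for \emph{arbitrary} $a_i,a_j$, so each node may simply wake once per period on its own clock with no phase negotiation at all. To realize coprimality distributedly I would let each node pick a prime period, using unique identifiers for symmetry breaking: node $i$ waits until every neighbor with smaller identifier has announced its period, then selects a prime $n_i\ge L_i$ strictly larger than all periods it has already heard and broadcasts it to $\Ng{i}$. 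Distinct primes are coprime, so every edge becomes phase-insensitive and the schedule is strictly feasible regardless of the (unknown, clock-determined) phases.

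For the complexity I would count one announcement per node along each incident edge, so that every edge carries $O(1)$ messages for a total of $O(M)$, while the identifier-increasing dependency relation is acyclic and its longest chain has length $O(N)$, which bounds the asynchronous running time; each local step costs $O(1)$ by the Euclidean-algorithm estimate recorded after the Chinese Remainder theorem. The main obstacle, in my view, is conceptual rather than computational: because the single phase $a_i$ couples all edges at $i$, a naive per-edge phase agreement is inconsistent around cycles, and it is exactly the coprime-period choice that decouples the edges and eliminates the need for any global phase agreement. The only residual difficulty, namely adjacent nodes making conflicting simultaneous choices in the asynchronous model, is resolved by the identifier ordering.
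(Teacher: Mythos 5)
Your proof is correct and takes essentially the same route as the paper's: both make the Chinese-Remainder divisibility condition $(n_i,n_j)\mid(a_i-a_j)$ hold \emph{irrespective} of the phases by choosing pairwise coprime periods $n_i\ge L_i$ on every edge, so that no phase negotiation is needed and rendezvous are automatically infinite. Your version simply fills in details the paper leaves implicit, namely a concrete distributed realization of coprimality (distinct primes selected under an identifier ordering to break symmetry) and the explicit $O(M)$ message and $O(N)$ time accounting.
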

\proof
A feasible wake up schedule is constructed in $N$ steps choosing for all nodes $n_i \geq L_i$ 
such that $(n_i,n_j)=1$ for all pairs neighboring nodes $i$ and $j$. For each edge we can 
apply the Chinese Reminder Theorem to its end nodes. Irrespective of $a_i$ and $a_j$, 
it holds $1=(n_i,n_j)|(\alpha_i-\alpha_j)$ at each step. The algorithm 
attains by construction a feasible wake-up schedule. 
\endproof

\begin{rem}
The above construction shows a basic effect: the request of synchronization of duty cycles 
introduces a {\em delay drift} (DD). In the worst case since
$[n_i,n_j]=n_i\cdot n_j$, products become exponentially large with respect to the network diameter. 
In order to minimize such drift, one would request $[n_i,n_j]=\min\{n_i,n_j\}$ at each step, i.e., 
either $n_i|n_j$ or vice versa.
\end{rem}

We get therefore:\vskip-3ex
\begin{cor}
The {\tt Weak WAKE-UP} problem can be solved in polynomial time.
\end{cor}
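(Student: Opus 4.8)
The plan is to show that the {\tt Weak WAKE-UP} problem admits a polynomial-time algorithm by exhibiting an explicit construction that simultaneously satisfies the coprimality condition needed for feasibility and keeps all periods polynomially bounded in the input size. First I would invoke Theorem~\ref{cor:nobound}, which already guarantees that a strictly feasible schedule exists and can be built in $O(N)$ steps with $O(M)$ messages: the only thing that construction requires is a choice of periods $n_i \geq L_i$ with $(n_i,n_j)=1$ for every edge $ij \in E$, together with a per-edge application of the Chinese Remainder theorem to fix the phases $a_{ij}, a_{ji}$. Since the Chinese Remainder theorem guarantees $1 = (n_i,n_j) \mid (\alpha_i - \alpha_j)$ at each step, a consistent rendezvous sequence $\{x_{ij}^n\}$ always exists and is automatically infinite by the periodicity dichotomy noted in Sec.~\ref{sec:model}. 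Thus the {\tt Weak WAKE-UP} problem reduces to producing a valid coprime assignment respecting the lower bounds $L_i$.

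Next I would argue that such a coprime assignment can be found in polynomial time. The natural approach is to assign to each node a distinct prime, or more simply to process the edges (or nodes) greedily and, for each node $i$, select the smallest integer $n_i \geq L_i$ that is coprime with the already-chosen periods of its neighbors. Because the degree $d_i$ is at most $N-1$ and the number of integers to avoid is bounded, a candidate coprime value of size polynomial in $N$ and in $\max_i L_i$ can always be located; for instance, choosing $n_i$ to be a prime exceeding $\max_j n_j$ over all already-assigned neighbors trivially guarantees coprimality, and by standard prime-gap bounds such a prime is only polynomially larger than its predecessor. This makes each arithmetic operation act on integers whose bit-length is $O(\log N + \log \max_i L_i)$, so the extended Euclidean computations (each $O(1)$ in the paper's convention, or polylogarithmic in the bit-length) and the per-edge phase fixing remain polynomial overall.

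I would then assemble these ingredients into the complexity count: $N$ steps for period selection, one Chinese-Remainder phase computation per edge for a total of $O(M)$ such computations, and polynomially-bounded integer sizes throughout, yielding an overall polynomial-time algorithm that outputs a strictly feasible schedule. Since a strictly feasible schedule is in particular feasible, and the {\tt Weak WAKE-UP} problem asks precisely for a (strictly) feasible schedule without imposing the upper delay constraints~(\ref{eq:constr_upper}), the construction settles the claim.

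The main obstacle I anticipate is not existence --- that is immediate from Theorem~\ref{cor:nobound} --- but rather bounding the \emph{size} of the periods so that the running time is genuinely polynomial in the input rather than merely finite. One must ensure that the greedy coprime choices do not force periods to grow super-polynomially along long paths in the graph (the ``delay drift'' effect flagged in the Remark, where $[n_i,n_j] = n_i \cdot n_j$ can blow up exponentially in the diameter). The subtlety is that the {\tt Weak} problem only needs pairwise coprimality on edges, not global coprimality, so periods do \emph{not} need to multiply along paths; a local greedy choice bounded by prime-gap estimates suffices, and I would make this quantitative argument explicit to rule out exponential blow-up and thereby secure polynomiality.
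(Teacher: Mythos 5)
Your proposal takes essentially the same route as the paper: there the corollary is obtained as an immediate consequence of Theorem~\ref{cor:nobound}, whose proof constructs pairwise-coprime periods $n_i \geq L_i$ (which trivially satisfy the energy constraints with $|A_i|=1$) and applies the Chinese Remainder theorem edge by edge so that $1=(n_i,n_j)$ divides every phase difference, exactly as you do. Your additional care in bounding the magnitudes of the chosen periods (via prime selection and gap estimates) fills in a detail the paper leaves implicit, but it is an elaboration of the same argument rather than a genuinely different approach.
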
\vskip-2ex

In the general case, when constraints on the maximum wake-up period are imposed, the problem 
becomes difficult due to the simultaneous requirements on the phases of nodes.  Given a certificate 
wake-up schedule, we can verify with $2E$ congruences if the constraints 
(\ref{eq:constr_upper}) and (\ref{eq:constr_lower}) are satisfied. Also, congruences appearing in
(\ref{eq:sol}) have the cost of computing $(n_i,n_j)$ for each pair of nodes  $i,j$, i.e., 
a solution of {\tt Strong WAKE-UP} problem can be verified in polynomial time with respect to the 
input size, i.e., $M$.

In the general case, when constraints on the maximum wake-up period are imposed, the given problem 
is made difficult due to the simultaneous requirement on the phases of nodes. It indeed 
belongs to the polynomial time verifiable algorithms class, i.e., {\tt WAKE-UP} is NP.  In fact, 
given a certificate wake-up schedule, we can verify with $2E$ congruences if the constraints 
(\ref{eq:constr_upper}) and (\ref{eq:constr_lower}) are satisfied. Also, congruences appearing 
(\ref{eq:sol}) have the cost of computing $(n_i,n_j)$ for each pair of nodes  $i,j$, i.e., 
a solution of {\tt WAKE-UP} problem can be verified 
with $O(M)$ operations.


The wake-up scheduling problem relates to the {\tt INTEGER FACTORIZATION}
problem: 
Given integer $R$, determine the factorization of $R$.

Notice that while the answer to the question whether a factorization of $R$ exists -- namely the {\tt PRIMALITY} 
problem -- can be actually solved in polynomial time~\cite{PRIMEisP}, the {\tt INTEGER FACTORIZATION} problem is harder. At present there 
exist several algorithms for factoring an integer $R$ that have
estimated complexity either exponential in the size of the binary
representation of $R$ (as in the case of the exhaustive trial method),
or sub-exponential (as in the case of the continuous fraction
method~\cite{giblin}).
No polynomial time deterministic algorithm is known that solves {\tt INTEGER FACTORIZATION}~\cite{arora}.

\begin{algorithm}[t]
\caption{{\tt PERIOD$(L_i,U_i)$}}
\label{alg:algorit0}
\begin{algorithmic}[1]
\REQUIRE {$L_i \leq U_i$ $\rightarrow Set:$ $x=L_i,n_i=L_i$}
\WHILE{$x\leq U_i$} 
\IF{$B | x$}
\STATE $n_i \leftarrow x$; BREAK
\ENDIF
\STATE $x\leftarrow x+1$ 
\ENDWHILE 
\end{algorithmic}
\end{algorithm}

\begin{thm}
{\tt WAKE-UP} is as hard as {\tt INTEGER FACTORIZATION}.
\end{thm}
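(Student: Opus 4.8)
The plan is to prove the lower bound by exhibiting a polynomial-time reduction $\text{{\tt INTEGER FACTORIZATION}}\le_P\text{{\tt WAKE-UP}}$, i.e. to show that any polynomial-time solver for {\tt Strong WAKE-UP} would factor integers in polynomial time.

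First I would strip the scheduling problem down to pure arithmetic. Restricting to strictly tight schedules, $|A_i|=1$, so on each edge $ij$ the rendezvous set is a single arithmetic progression $x_0+[n_i,n_j]\Z$ by part (iii) of the Chinese Remainder theorem, and consecutive meetings are spaced exactly $[n_i,n_j]$ apart. Hence constraint (\ref{eq:constr_upper}) reads $[n_i,n_j]\le\min\{U_i,U_j\}$ and (\ref{eq:constr_lower}) reads $n_i\ge L_i$. Choosing all phases equal (e.g. $a_{ij}=0$) makes $(n_i,n_j)\mid(a_{ij}-a_{ji})$ hold trivially, so rendezvous always exist and the periods carry the whole difficulty: a strictly tight schedule is nothing but an assignment of integers $n_i\ge L_i$ whose pairwise least common multiples across edges obey the $U$-bounds.

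Next I would build a constant-size gadget that forces one period to be a nontrivial divisor of the number $R$ to be factored. Take three nodes $i,j,k$ with edges $ij$ and $ik$. Pin $n_j=R$ by setting $L_j=R$ and $U_j=R$: since $[n_i,n_j]\ge n_j\ge R$, the bound $[n_i,n_j]\le U_j=R$ forces $n_j=R$ and $[n_i,R]=R$, i.e. $n_i\mid R$. Impose an upper cap through $k$ by setting $U_k=R-1$: as $n_i\le[n_i,n_k]\le U_k$, this yields $n_i\le R-1$, while $n_k=1$ remains a legal choice so the gadget never becomes infeasible on $k$'s account. Finally set $L_i=2$, $L_k=1$ and $U_i=R+1$, so that node $i$'s single per-node delay bound is slack on both its edges and the trivial necessary condition $U\ge 1+\max L$ holds at every vertex. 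With these choices a strictly tight schedule exists if and only if $R$ has a divisor $d$ with $2\le d\le R-1$, and any solution returns $n_i=d$, a nontrivial factor of $R$. From this single factor I would recover the full factorization by recursion: test $R$ for primality in polynomial time~\cite{PRIMEisP}; if composite, one oracle call on the gadget yields a nontrivial $d$, and I recurse on $d$ and $R/d$. This makes $O(\log R)$ calls on $O(1)$-vertex instances with $O(\log R)$-bit parameters, so the reduction is polynomial. (The converse containment, which upgrades ``as hard as'' to the equivalence announced in the abstract, follows from the $O(M)$ gcd/lcm-based verification described above together with Algorithm~\ref{alg:algorit0}.)

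I expect the main obstacle to be the bookkeeping forced by the fact that $U_i$ is a \emph{single} per-node bound shared by all edges incident to $i$, rather than a per-edge quantity: the two distinct lcm bounds needed on $ij$ and $ik$ must therefore be realized through the neighbours' values $U_j$ and $U_k$ while $U_i$ is kept slack. The second delicate point is forcing a \emph{proper} divisor — excluding $n_i=1$ via $L_i\ge 2$ and excluding $n_i=R$ via the cap from $k$ — and verifying that the CRT phase condition never re-introduces infeasibility, so that feasibility of the gadget is \emph{exactly} equivalent to the existence of a factor of $R$.
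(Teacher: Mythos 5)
Your proof is correct, and at the top level it follows the same strategy as the paper's: reduce {\tt INTEGER FACTORIZATION} to a constant-size instance of Problem~\ref{prob:wakeup_strong} in which one period is pinned to $R$ (via $L=U=R$), so that the rendezvous-spacing bound forces an adjacent period $n_i$ to satisfy $[n_i,R]\le R$, hence $n_i\mid R$, while $L_i=2$ and a cap $n_i\le R-1$ make that divisor proper. Where you genuinely depart from the paper is in how the cap is enforced, and your version is the sound one. The paper uses only two nodes and places $U_1=R-1$ on the very node whose period must become the factor; but under the paper's own definition (\ref{eq:constr_upper}) --- $U_1$ bounds the rendezvous spacing on \emph{every} edge at node $1$ --- that spacing equals $[n_1,n_2]\ge n_2\ge L_2=R>R-1$, so the paper's instance is infeasible for every $R$: the schedule $(0,m),(0,R)$ exhibited in its forward direction violates node $1$'s constraint, and the instance also violates the paper's own stated necessary condition $U_1\ge 1+L_2$. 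The paper's converse direction implicitly reads $U_1$ as a cap on the period $n_1$ rather than on the spacing, i.e., it mixes two semantics. Your third node resolves exactly this tension (which you correctly flagged as the crux): $U_j=R$ does the divisibility forcing, the auxiliary node $k$ with $L_k=1$, $U_k=R-1$ enforces $n_i\le[n_i,n_k]\le R-1$ without ever causing infeasibility (take $n_k=1$), and $U_i=R+1$ stays slack on both incident edges, so the gadget admits a strictly tight schedule if and only if $R$ is composite. Two further points in your favour: you supply the recursion (with polynomial-time primality testing, $O(\log R)$ oracle calls on $O(\log R)$-bit gadgets) needed to turn single nontrivial splits into the full factorization the problem statement demands, a step the paper glosses over; and your explicit restriction to strictly tight schedules, $|A_i|=1$, is essential rather than cosmetic, since with multiple phases allowed the converse fails (e.g.\ $n_i=R$, $A_i=\{0,1\}$, $n_k=1$ yields a tight schedule even for prime $R$) --- a caveat that applies equally to the paper's two-node gadget.
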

\proof
Consider a polynomial time reduction to the {\tt INTEGER FACTORIZATION} problem. The input is a integer $R>0$. The reduced instance of 
the {\tt WAKE-UP} is a graph composed of two nodes, $1$ and $2$, and one edge that joins them; furthermore 
$U_2=L_2=R$, $U_1=R-1$ and $L_1=2$. Assume that a solution of the {\tt INTEGER FACTORIZATION} problem is given, i.e., 
$R$ has proper factors $m$ and $n$, i.e., $R=m\cdot n$, $m\not |1$, $n1 \not |1$. Consider $n_2=L_2=U_2=R$ 
and $n_1=m$. Furthermore, let $a_1=a_2=0$: then $(0,m)$ and $(0,R)$ are a strictly tight wake-up 
schedule. Vice versa, assume that a feasible schedule exists. Indeed $n_2=L_2=U_2=R$ by construction. Observe 
that the $n_1>1$ since $L_1\geq 2$. However, the Chinese Reminder Theorem ensures that $[n_1,R] \leq R$, 
 which in turn implies that $n_1$ and $R$ cannot be co-prime, otherwise $[n_1,R]=n_1 R \geq 2 R > R$, 
 a contradiction. Hence $n_1|R$, and it is a proper factor of $R$ since $1<2=L_2 \leq n_1\leq R-1 < R$. 
Hence, by solving the {\tt WAKE-UP} problem, we have answered to the problem of factorizing  
$R$ as $R=n_1 \cdot q$ for some pair of proper factors $n_1,q>1$, which concludes the proof.
\endproof

As reported in App.~\ref{sec:isomorphism}, there exists a further argument for the equivalence proved above: 
 a connection exists between {\tt WAKE-UP} and {\tt INTEGER FACTORIZATION} through a standard isomorphism. {\em Also, 
{\tt WAKE-UP} is not NP-complete unless NP = co-NP since it is known that 
{\tt INTEGER FACTORIZATION} is NP-complete if and only if NP = co-NP, which (at present) is believed 
 a convincing argument for ensuring that no polynomial time complexity algorithms exists~\cite{arora}.}

A question naturally arising is whether similar issues hold if we
formulate the problem in continuous time, as opposed to the
discrete-time framework. In App.~\ref{sec:continuous} 
we show that, under a rationality assumptions, the two formulations
are equivalent. 

Therefore, no efficient algorithm for exactly solving {\tt Strong WAKE-UP} can be
designed. In the next section we will introduce two heuristics for the
{\tt Strong WAKE-UP} which have message complexity linear in the number of links in
the graph.

\subsection{Local time versus global time}\label{sec:sub:local}

From the implementation standpoint, a remark is needed: according to the framework proposed, the 
elements of $A_i$ are defined with respect to the common time. But, one might question that 
nodes operations are likely based on local time only. However, from the Chinese 
Remainder theorem, phase differences $\mod (n_i,n_j)$ matter. In 
practice, nodes will know their relative phase, i.e., the time when a wake-up occurs
for a neighbor with respect to their own time axis. Thus, node $i$ will store the phase corresponding to node $j$ 
as $\Delta_{ij}=a_{ij}-\alpha_i$ and node $j$ will do the same for $\Delta_{ji}=a_{ji}-\alpha_j$. 

We can now see that phase differences with respect to the global time axis can be handled 
using values measured with respect to local time axes. In fact, 
\begin{eqnarray}
&&a_{ij}-a_{ji} \mod (n_i,n_j) \nonumber \\
&&=(\Delta_{ij} - \Delta_{ji})\!\! \mod (n_i,n_j)  - (\alpha_i-\alpha_j)\!\!\! \mod (n_i,n_j) \nonumber \\
&&=(\Delta_{ij} - \Delta_{ji})\!\! \mod (n_i,n_j)  + (\alpha_j-\alpha_i)\!\!\! \mod (n_i,n_j) \nonumber 
\end{eqnarray}
Where last term corresponds to the offset of the two local clocks and can be easily measured time-stamping 
the local index of a certain time slot. 

In the rest of the paper we will refer to the common time axis for the sake of clarity. 



\section{Algorithms Design}\label{sec:algos}
We assume that each node $i$ maintains a sorted list of neighbors $\Ng{i}$ according to their degree $d_i$.

The algorithms that we adopt in the following use a certain factor basis $B=\{p_1,p_2,\ldots,p_n\}$ where $p_i$s are 
prime numbers \cite{koblitz}. For the sake of notation, let us say that $B|n$ if $n$ factorizes in $B$. 
Observe that the choice of the factors of the wake-up periods at each node is critical; the Chinese Remainder 
Theorem suggests that period $n_i$ should be chosen free of large factors.


Nodes choose period $n_i\in [L_i,U_i]$ such that it factorizes according to $B$ using the procedure described 
in Alg.~\ref{alg:algorit0}. Alg.~\ref{alg:algorit0} returns the smallest integer in $[L_i,U_i]$ that can be written 
as $p_1^{q_1}\cdots p_n^{q_n}$, $0\leq q_i\in \Z$, provided that such a number exists. If it does not exist, $L_i$
is returned. 
 We now introduce two heuristics for {\tt Strong WAKE-UP}.

\subsection{{\tt BFS WAKE-UP} Algorithm}\label{sec:equential}

Once nodes have chosen their periods according to Alg.~\ref{alg:algorit0}, the simplest rationale is 
to align their phases pairwise. Alg.~\ref{alg:algorit1} is based on a breadth first 
search (BFS) visit of the network rooted at a given node; as customary for BFS search 
a queue $Q$ is employed.  

Each node $i$ maintains information about nodes in its neighborhood, i.e., the node ID, the phases 
set $A_{i}$ and a special flag $f$ that is used for the BSF procedure, where (i) $f(j)=0$ means $j$ 
unexplored (ii) $f(j)=1$ means node $j$ discovered but not all neighbors of $j$ examined (iii) $f(j)=2$ 
node $j$ explored. By 
default, node $i$ wakes up at the origin of its own period, i.e., at $\alpha_i$. 


The algorithm selects all unexplored nodes and enqueues them at most once, so that it terminates in $N$ steps. 

\begin{algorithm}[t]
\caption{\tt BFS WAKE-UP}
\label{alg:algorit1}
\begin{algorithmic}[1]
\REQUIRE {$L_i \leq U_i$ $\rightarrow Default:$ $A_i=\{\alpha_i\}$} 
\FORALL {$i \in V$}
\STATE $n_i=\mbox{{\tt PERIOD}}(L_i,U_i)$
\STATE $f(i) \leftarrow 0$  
\ENDFOR
\ENSURE {Select: $i\in V$ with largest degree}
\STATE $f(i)=1$  
\STATE $Q\leftarrow \{i\}$      \COMMENT{Enqueue the first node}
\WHILE{$Q \not = \emptyset$ \COMMENT{Visit all the nodes in BFS}}
\STATE $i\leftarrow Q[1]$ \COMMENT{Get the node on top of the queue}
\FORALL {$j \in \Ng{i}$}
\IF{$f(j)=0$ \COMMENT{If it was not explored before}}
\STATE $Q \leftarrow \{j\}$  \COMMENT{Enqueue this node}
\STATE $\alpha_j\leftarrow \alpha_i$  \COMMENT{Align time axis of node $i$ and node $j$}  
\STATE $A_j\leftarrow \{\alpha_j\}$   \COMMENT{Update the phase if $j$} 
\STATE $x_{ij}^k=x_{ji}^k=\alpha_i+k\cdot [n_i,n_j],\quad k\in \mathbb Z$ \COMMENT{Rendezvous times}
\STATE $f(j)\leftarrow 1$ \COMMENT{Visited}          
\ENDIF
\ENDFOR
\STATE    $Q = Q \setminus \{i\}$  \COMMENT{Dequeue $i$}
\STATE    $n_i\leftarrow [n_{i},(n_{i_1},\ldots,n_{i_{d_i}})]$ \COMMENT{Use largest possible period at $i$}
\STATE    $f(i)=2$  \COMMENT{Explored}            
\ENDWHILE 
\end{algorithmic}
\end{algorithm}


Alg.~\ref{alg:algorit1} reports on the pseudocode of the centralized version of the {\tt BFS WAKE-UP} algorithm;  
however, the {\tt BFS WAKE-UP} algorithm can be easily implemented in a distributed fashion \cite{Cormen}. Observe that {\tt BFS WAKE-UP} 
attains phase synchronization and a strictly feasible schedule. 

 Below, we formalize the properties of the algorithm; the proof is reported in App.~\ref{sec:proofs}.
\begin{thm}\label{thm:BFS}
{\tt BFS WAKE-UP} has the following properties:
\begin{enumerate}
\item[  i.] it produces a strictly feasible wake-up schedule; 
\item[ ii.] it has complexity $O(E)$; 
\item[iii.] a distributed implementation requires $O(M)$ messages.
\end{enumerate}
\end{thm}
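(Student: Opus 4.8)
The plan is to treat the three claims separately, deriving (i) from the Chinese Remainder theorem once the effect of the alignment step is understood, and deriving (ii)--(iii) from the standard cost of a breadth-first traversal augmented with $O(1)$ gcd/lcm bookkeeping per edge. The single idea that drives the whole argument for (i) is that the alignment line $\alpha_j\leftarrow\alpha_i$ in Alg.~\ref{alg:algorit1} does not merely synchronize adjacent phases but propagates one common origin to the entire connected component.

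For property (i), I would first establish this common-origin fact by induction on the BFS discovery order. The root retains its own $\alpha$; every node $j$ discovered from an already-dequeued node $i$ sets $\alpha_j\leftarrow\alpha_i$, and since $i$'s origin already equals the root's by the inductive hypothesis, so does $j$'s. Hence $\alpha_i=\alpha$ for all $i$, with $\alpha$ the root origin. Because each $A_i$ is initialized to the singleton $\{\alpha_i\}$ and is never enlarged, $|A_i|=1$, which gives \emph{strictness}. For feasibility I would then invoke the solvability condition of the Chinese Remainder theorem applied to (\ref{eq:sol}): on \emph{every} edge $ij\in E$ the phases satisfy $a_{ij}=a_{ji}=\alpha$, so $a_{ij}-a_{ji}=0$ and the premise $(n_i,n_j)\mid(a_{ij}-a_{ji})$ holds vacuously; the system therefore admits a solution, and being periodic modulo $[n_i,n_j]$ it yields an infinite set of rendezvous times. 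It remains to verify the energy constraint (\ref{eq:constr_lower}): {\tt PERIOD} returns $n_i\in[L_i,U_i]$ (or $L_i$), so $n_i\geq L_i=L_i\cdot|A_i|$ at first, and the finalization $n_i\leftarrow[n_i,(n_{i_1},\ldots,n_{i_{d_i}})]$ can only increase $n_i$, hence the bound survives.

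For property (ii) I would bound each stage of Alg.~\ref{alg:algorit1}. The initialization runs {\tt PERIOD} and sets a flag once per node, which I treat as $O(1)$ per node under the paper's convention that periods have bounded size and the extended Euclidean algorithm costs $O(1)$. The BFS loop dequeues each node once and inspects each edge a constant number of times, i.e.\ $O(V+E)$; the per-node finalization forms $[n_i,(n_{i_1},\ldots,n_{i_{d_i}})]$ with $d_i$ gcd operations and one lcm, costing $O(d_i)$, and $\sum_i d_i=2E$ by the handshaking identity. Since the network is connected, $E\geq V-1$, so the total is $O(E)$. For property (iii) I would appeal to the $O(M)$ message complexity of distributed BFS~\cite{Cormen} and check that the extra operations add only a constant number of messages per edge: a discovery message, the alignment message carrying $\alpha_i$, and the period exchange needed to assemble $(n_{i_1},\ldots,n_{i_{d_i}})$; summing over edges gives $O(M)$.

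The main obstacle I anticipate is not any individual estimate but making the common-phase argument airtight in two places: across \emph{non-tree} (cross and back) edges, which are never explicitly aligned by the algorithm, and across the final period-enlargement step, which changes $n_i$ after $i$ has been processed. Both are resolved by the same observation used above, namely that every edge carries $a_{ij}-a_{ji}=0$: this makes the divisibility premise of the Chinese Remainder theorem hold \emph{regardless} of the periods, so neither a non-tree edge nor the lcm update can ever destroy the existence of infinitely many rendezvous, and the energy bound is likewise preserved because the update is monotone in $n_i$.
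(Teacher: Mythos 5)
Your proposal is correct and follows essentially the same route as the paper's own proof: all phases collapse to the root's origin, so $a_{ij}-a_{ji}=0$ on every edge and the Chinese Remainder theorem's divisibility premise holds trivially, while the complexity and message bounds come from summing $O(1)$ work and $O(1)$ messages per edge via the degree sum $\sum_i d_i = 2E$. If anything, your write-up is more careful than the paper's: you explicitly verify the energy constraint (\ref{eq:constr_lower}), handle non-tree edges, and note that the final update $n_i\leftarrow[n_i,(n_{i_1},\ldots,n_{i_{d_i}})]$ is monotone in $n_i$, points the paper covers only implicitly (the last via the identity $([n_i,n_{i_1}],\ldots,[n_i,n_{i_{d_i}}])=[n_i,(n_{i_1},\ldots,n_{i_{d_i}})]$).
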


\begin{rem}
 For implementation purposes, the above algorithm can be very easily coupled to the construction 
of shortest path trees \cite{Cormen}; in such a way customary algorithms such as Dijkstra 
can conveniently provide {\em joint routing and wake-up scheduling} with no need for 
separate message exchange; in turn, the weight of link $ij$ is represented naturally by $[n_i,n_j]$.
\end{rem}

The convergence time of the algorithm is proportional to the diameter of the network~\cite{Cormen};
 the {\tt BFS WAKE-UP} requires a preliminary leader election procedure; further its convergence 
may be slow. Accordingly, we devised an alternative heuristic algorithm, described below, which effectively addresses
such shortcomings by parallelising the computation of the scheduling.

\subsection{{\tt PARALLEL WAKE-UP} Algorithm}\label{sec:parallel}

The design of a parallel algorithm for wake-up synchronization requires careful handling of 
nodes phases; in particular, in order to avoid the increase of the size of the $A_i$s, it is 
possible to operate pairwise adaptation of nodes phases (phase adaptation).

\begin{lem}\label{prop:adjphase}
Let $ij,jk\in E$ and fix $n_i,n_j,n_k$, $a_{kj}$ and $a_{ij}$. A feasible wake-up schedule for $i,j,k$ 
such that $a_{jk}=a_{ji}$ exists if and only if
$(n_j,(n_i,n_k))|(a_{kj} - a_{ij})$. If such condition holds, the schedule is obtained for 
$a_{jk}=a_{ji} = a_{kj}+n_k \cdot (x_0\cdot e)$,  
where $e, x_0 \in \mathbb{Z}$ are solutions of 
$a_{kj}-a_{ij}= (n_j,n_k)\cdot e \cdot x_0 + (n_i,n_j)\cdot e
\cdot y_0,$ with $y_0 \in \mathbb{Z}$.
\end{lem}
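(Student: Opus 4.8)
The plan is to reduce the three-node feasibility requirement to two applications of the Chinese Remainder theorem (CRT) together with one gcd identity. First I would note that, once the periods $n_i,n_j,n_k$ and the active-set sizes are held fixed, the energy constraints~(\ref{eq:constr_lower}) are unaffected by any choice of phases, since they involve only $n_\ell$ and $|A_\ell|$; keeping $|A_j|=1$ (the single shared phase) preserves them. Hence feasibility of the schedule on $\{i,j,k\}$ is equivalent to the existence of infinitely many rendezvous on each of the two edges $ij$ and $jk$ \emph{separately}. Applying part~(i) of the CRT to the rendezvous relation~(\ref{eq:sol}), edge $ij$ admits rendezvous iff $(n_i,n_j)\mid(a_{ij}-a_{ji})$ and edge $jk$ iff $(n_j,n_k)\mid(a_{jk}-a_{kj})$.

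Next I would impose the coupling $a_{ji}=a_{jk}=:a$, which is the sole interaction between the two edges, turning the two divisibility conditions into a single simultaneous system in the one unknown $a$:
\begin{equation*}
a\equiv a_{ij}\ (\mathrm{mod}\ (n_i,n_j)),\qquad a\equiv a_{kj}\ (\mathrm{mod}\ (n_j,n_k)).
\end{equation*}
Applying part~(i) of the CRT once more, now to this system, the solvability criterion is $\big((n_i,n_j),(n_j,n_k)\big)\mid(a_{ij}-a_{kj})$. The decisive (if elementary) step is the gcd identity $\big((n_i,n_j),(n_j,n_k)\big)=(n_i,n_j,n_k)=\big(n_j,(n_i,n_k)\big)$, which follows from associativity and commutativity of the gcd; substituting it, and using that divisibility is insensitive to the sign of the difference, yields exactly the stated condition $(n_j,(n_i,n_k))\mid(a_{kj}-a_{ij})$. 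Both directions of the ``if and only if'' then follow at once: necessity, because a common $a$ makes the system consistent and the CRT forces the combined modulus to divide the difference; sufficiency, because the same criterion produces an $a$, which we assign as $a_{ji}=a_{jk}=a$, so that each edge again has infinitely many rendezvous by part~(iii) of the CRT.

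For the explicit construction I would, assuming the condition holds, set $e=(a_{kj}-a_{ij})/(n_i,n_j,n_k)$ and run the extended Euclidean algorithm to obtain Bézout coefficients $x_0,y_0$ with $(n_j,n_k)\,x_0+(n_i,n_j)\,y_0=(n_i,n_j,n_k)$; a particular solution of the system is then $a=a_{kj}+(n_j,n_k)\,x_0\,e$, up to the sign bookkeeping inherited from the Bézout convention $n_1u_1-n_2u_2=(n_1,n_2)$ used earlier, which matches the closed form in the statement. I expect the only genuine obstacle to be organizational rather than technical: recognizing that feasibility decouples edge-by-edge with $j$'s shared phase as the unique coupling, and correctly collapsing the nested gcds to the triple gcd $(n_i,n_j,n_k)$. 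Once that identity is in place, the remaining arithmetic — tracking signs and replacing $(n_j,n_k)$ by $n_k$ in the final formula, which is legitimate since $\big(n_k,(n_i,n_j)\big)=(n_i,n_j,n_k)$ as well — is routine.
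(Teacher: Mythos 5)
Your proof is correct and follows essentially the same route as the paper's: reduce feasibility to the two pairwise divisibility conditions $(n_i,n_j)\mid (a-a_{ij})$ and $(n_j,n_k)\mid (a_{kj}-a)$, observe via the gcd identity $\bigl((n_i,n_j),(n_j,n_k)\bigr)=(n_j,(n_i,n_k))$ that a common phase $a$ exists iff this triple gcd divides $a_{kj}-a_{ij}$, and then construct $a$ explicitly by scaling a B\'ezout identity by $e$. If anything, your write-up is slightly tidier than the paper's: you decouple the two edges from the outset (the paper writes one three-way simultaneous congruence system, which as stated is stronger than the pairwise rendezvous that feasibility actually requires, and then only uses its pairwise consequences), and you correctly flag the discrepancy in the statement's closed form --- the paper's own proof derives $a=a_{kj}-(n_j,n_k)\,e\,x_0$, with $(n_j,n_k)$ and a minus sign, matching your sign-convention remark.
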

The proof is reported in App.~\ref{sec:proofs}.

\begin{algorithm}[t]
\caption{{\tt PARALLEL WAKE-UP}}\label{alg:algorit2}
\begin{algorithmic}[1]
\REQUIRE {$L_i \leq U_i$}
\FORALL {$i \in V$}
\STATE $n_i=\mbox{{\tt PERIOD}}(L_i,U_i)$ $\rightarrow Default:$ $A_i=\{\alpha_i\}$ 
\STATE $f(i),t(i) \leftarrow 0$ \COMMENT{Fix the initial value to flags $f$ and $t$}
\ENDFOR
\ENSURE {Node $i$ starts a random timer }
\STATE Timer expired at node $i$: send a message to stop neighbors' timers in $\Ng{i}$
\STATE {\bf Initialize:}
\STATE $n_i\leftarrow [n_{i},(n_{i_1},\ldots,n_{i_{d_i}})]$ \COMMENT{Use the largest possible period at $i$}
\IF{$f(i)=0$ \mbox{ \bf and} there exists $j \in \Ng_i$ s.t. $f(j) \neq 0$} 
      \STATE $\alpha_i\leftarrow \alpha_j$ \COMMENT{Assign the neighbour's phase to node $i$}
      \STATE $t(j)\leftarrow 1$  \COMMENT{Node $j$ cannot change its phase anymore}
\ELSIF{ there exist $j,k \in \Ng_i: f(j)=f(k)=2$ \mbox{ \bf and }
  $A_j\cap A_k = \emptyset$ \mbox{ \bf and } conditions of
  Lemma~\ref{prop:adjphase} are satisfied}
  \STATE \mbox{ \bf case }$t(i)=0$ \mbox{ \bf then }$\alpha_i\leftarrow a_{jk} $  
  \STATE \mbox{ \bf case }$t(i)=1$ \mbox{ \bf then }$A_{i}=A_{i} \cup \{a_{jk}\}$ 
  \STATE    $A_i \leftarrow A_i \setminus \{a_{ij},a_{ik}\}$ \COMMENT{Delete $a_{ij}$ and $a_{ik}$ from $A_i$}
\ENDIF
\STATE {\bf Visit all neighbors:}
\FORALL{$j \in \Ng{i}$  { \bf and}  $f(j)\not=2$} 
\IF{$f(j)=0$ \COMMENT{If it was not explored before}}
  \STATE $\alpha_j\leftarrow \alpha_i$  \COMMENT{Assign the phase to node $j$}     
  \STATE $f(j)=1$
\ELSIF{$A_{i}\cap A_{j} = \emptyset$}  
  \STATE $A_{j}\leftarrow A_{j} \cup \{\alpha_i\}$      \COMMENT{Add the phase to node $j$} 
\ENDIF  
\STATE $x_{ij}^0=\alpha_i+u \frac{\alpha_j-\alpha_i}{(n_i,n_j)}$
\COMMENT{$u$ computed using extended Euclidean algorithm}
\STATE $x_{ij}=x_{ji}=x_{ij}^0+k [n_i,n_j]$, $k=0,1,\ldots$ \COMMENT{Rendezvous times}     
\STATE $f(j)=2$
\ENDFOR
\end{algorithmic}
\end{algorithm}

The algorithm pseudo-code is reported in Alg.~\ref{alg:algorit2}. Each node $i$ maintains information 
about nodes in its neighborhood, i.e., the node ID, the phases set $A_{j}$, the wake-up period $n_j$ 
and two special flags $f$ and $t$. Special flags are used in order to determine $A_i$ and $n_i$. In particular 
(i) $f(j)=0$ means $j$ unexplored (ii) $f(j)=1$ means node $j$ discovered but $n_j$ has not been fixed yet 
and all neighbors of $j$ have been explored (iii) $f=2$ when node $j$ has been explored and therefore 
$\alpha_j$ and $n_j$ have been fixed. Also, $t(j)=0$ means $j$ that can adjust $\alpha_j$ (ii) $t(j)=1$ means 
$i$ cannot change $\alpha_j$ value because some of its neighbors used $\alpha_j$. 

At the start of the algorithm each node $i$ chooses $n_i$ using {\tt PERIOD} and maintains a list of neighbors 
$\Ng{i}$. Then node $i$ is marked {\em unexplored} and {\em available} to change $\alpha_i$ value ($f[i]=t[i]=0$). 
Hence, each node starts a random timer: when the timer expires node $i$ freezes neighbors' timers 
and determines rendezvous times with all nodes in its neighborhood.

If $i$ is unexplored, i.e., $f(i)=0$, node $i$ checks whether it is in range of an already existing 
node $j$ explored. If so, $i$ just aligns $\alpha_i=\alpha_j$; $j$ is then is marked {\em not available} 
to change its time axis anymore, i.e., $t(j)=1$. Otherwise, 
if $i$ is {\em discovered}, i.e., $f(j)=1$, $i$ checks whether it is in range of two already existing 
$j$ and $k$ already explored without a phase in common between them and if the phase can be adjusted then 
$i$ has two options: (i) if $i$ is {\em available} to change its phase ($t(i)=0$), then $i$ just aligns 
its phase to the novel phase, i.e., $\alpha_i=\alpha_j$, (ii) if $i$ is {\em not available} to 
change its phase ($t(i)=1$), then the novel phase is added to $A_i$.

Then, $i$ visits its neighbors for which no adaptation was performed. If $j\in \Ng{i}$ is {\em unexplored}, $\alpha_j=\alpha_i$ and $j$ is marked {\em discovered}, 
$f(j)=1$. If $j$ is {\em explored} and $A_i\cap A_j=\emptyset$, the phase of $i$ is added to $A_j$. Finally, $i$ marks 
itself {\em explored}.

The proof of correctness and the fact that Alg.~\ref{alg:algorit2} has message exchange 
is $O(M)$ is similar to the one for Alg.~\ref{alg:algorit1}:
\begin{thm}
{\tt PARALLEL WAKE-UP}:
\begin{enumerate}
\item[  i.] produces a wake-up schedule; 
\item[iii.] it requires $O(M)$ messages for a distributed implementation.
\end{enumerate}
\end{thm}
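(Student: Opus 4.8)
The plan is to mirror the structure of the proof of Theorem~\ref{thm:BFS}, exploiting the fact that the random timers serialize every \emph{local} conflict: when the timer of node $i$ expires it freezes the timers of all $j\in\Ng{i}$ before touching any shared phase data. I would first argue that this freezing induces, on each closed neighborhood, a total order on the phase-setting operations, so that the parallel execution can be analyzed edge by edge exactly as a sequence of independent applications of the Chinese Remainder theorem. Feasibility (item~i) then reduces to checking that each such application meets the solvability hypothesis $(n_u,n_v)\mid(a_{uv}-a_{vu})$, while the message bound (item~iii) reduces to counting a constant number of messages per incident edge.

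For item~i I would maintain the following invariant as the timers fire: every edge $\{u,v\}$ whose rendezvous times have been fixed satisfies $(n_u,n_v)\mid(a_{uv}-a_{vu})$. Two cases arise, matching the two ways the algorithm installs a phase. In the alignment case an update of the form $\alpha_v\leftarrow\alpha_u$ forces $a_{uv}-a_{vu}\equiv 0 \bmod (n_u,n_v)$, which is trivially divisible. In the adaptation case the ELSIF branch is entered only after the hypothesis of Lemma~\ref{prop:adjphase} has been tested --- with the executing node in the role of the middle node~$j$ of the lemma --- so the phase it installs is exactly the value the lemma certifies to be simultaneously consistent with both incident edges. By the Chinese Remainder theorem the invariant makes the rendezvous set $X=(A_u+n_u\Z)\cap(A_v+n_v\Z)$ nonempty, hence infinite, for every edge. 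Combined with $n_i\ge L_i$ returned by {\tt PERIOD} and the fact that the period updates $n_i\leftarrow[n_i,(n_{i_1},\ldots,n_{i_{d_i}})]$ only increase $n_i$, this would yield the energy constraint~(\ref{eq:constr_lower}) and hence a feasible schedule.

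The hard part will be justifying that the asynchronous freezing genuinely prevents two adjacent nodes from writing to the shared phase data concurrently, and that the flag logic $(f,t)$ faithfully records global progress under this partial serialization. Concretely, I must show that no edge is ever left unprocessed (every $j\in\Ng{i}$ is visited with $f(j)$ advanced to $2$), that the flag $t$ correctly prevents a node from overwriting a phase to which a neighbor has already committed, and that each phase-adaptation step enlarges $A_i$ by at most one element while deleting the two superseded phases $a_{ij},a_{ik}$, so that $|A_i|$ stays controlled and~(\ref{eq:constr_lower}) is not violated. Establishing that the timer discipline admits a consistent global linearization of these local updates is the crux; once it is in place, feasibility follows edge by edge as above.

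Finally, for item~iii I would charge the messages to incident edges. When the timer of node $i$ expires it sends a bounded number of messages along each edge to $\Ng{i}$: one to freeze the neighbor's timer and a constant number to exchange periods and phases and to fix the rendezvous times $x_{ij}=x_{ji}=x_{ij}^0+k[n_i,n_j]$. Since each node fires its timer once and each edge carries only a constant number of such exchanges, the total message count is $O\!\bigl(\sum_i d_i\bigr)=O(2M)=O(M)$, which is the claimed bound.
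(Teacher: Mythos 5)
Your edge-by-edge invariant for the rendezvous part and your charging argument for the message bound are sound, and they follow essentially the route the paper intends: the paper gives no standalone proof, stating only that the argument is ``similar to the one for Alg.~1'', i.e., Chinese-Remainder solvability checked per edge plus a double-counting of a constant number of messages per link. Indeed, the alignment case, the Lemma~\ref{prop:adjphase} adaptation case, and the case in which $\alpha_i$ is appended to $A_j$ all force $(n_u,n_v)\mid(a_{uv}-a_{vu})$ on every processed edge, so each edge gets infinitely many rendezvous times, and $O\bigl(\sum_i d_i\bigr)=O(M)$ messages suffice for item~iii.

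The genuine gap is in item~i: you prove strictly more than the theorem claims, and the stronger statement is false. The theorem says only that {\tt PARALLEL WAKE-UP} ``produces a wake-up schedule'' --- deliberately \emph{not} a feasible one --- and the paper remarks immediately afterwards that ``since multiple phases may be added at one node, it is possible that the output of the algorithm is a schedule that is not feasible'' (its simulations report infeasibility in up to $1\%$ of instances). Your step ``combined with $n_i\geq L_i$ \dots this would yield the energy constraint~(\ref{eq:constr_lower})'' conflates $n_i\geq L_i$ with $n_i\geq L_i\cdot|A_i|$: constraint~(\ref{eq:constr_lower}) carries the factor $|A_i|$, and the visit-all-neighbors branch $A_j\leftarrow A_j\cup\{\alpha_i\}$ (executed whenever $j$ is already explored and $A_i\cap A_j=\emptyset$) enlarges $|A_j|$ with no compensating deletion; only the Lemma~\ref{prop:adjphase} branch removes phases, and it applies only when its divisibility hypothesis holds. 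Hence $|A_j|$ can grow up to $d_j-1$ and $n_j\geq L_j|A_j|$ can fail; the paper copes with this not by proving feasibility but via the separate average bound $DC<\frac{d-1}{L}$ in the theorem that follows. Your proof becomes correct if you simply drop the final feasibility claim --- the invariant you maintain already establishes exactly what the theorem asserts. (Your acknowledged ``crux'' about linearizing concurrent updates is left as a promise rather than an argument, but since the paper itself offers nothing on this point, it is not the decisive defect.)
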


In principle, since multiple phases may be added at one node, it is possible that 
the output of the algorithm is a schedule that is not feasible. A simple bound on 
the average duty cycle (DC) is provided by the following
\begin{thm}
Under Alg.~\ref{alg:algorit2} $DC<\frac {d-1}{L}$, where $L=min L_i$ and $d$ is the average node degree
\end{thm}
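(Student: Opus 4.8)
The plan is to bound the average duty cycle $DC=\frac{1}{N}\sum_{i=1}^{N}\frac{|A_i|}{n_i}$ by controlling its two ingredients separately: a uniform lower bound on the periods $n_i$, and an aggregate upper bound on the total number of phases $\sum_i |A_i|$. Since the average node degree satisfies $\sum_i d_i=Nd$, the target inequality is, after the period normalisation, equivalent to showing $\sum_i|A_i|<N(d-1)=\sum_i d_i - N$.

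First I would dispose of the periods. By construction, line~2 of Alg.~\ref{alg:algorit2} sets $n_i=\mbox{{\tt PERIOD}}(L_i,U_i)$, and the {\sc lcm} update at line~8 can only enlarge it; in either case $n_i\geq L_i\geq L=\min_i L_i$. Hence $\frac{|A_i|}{n_i}\leq\frac{|A_i|}{L}$ for every $i$, and therefore
\[
DC \;\leq\; \frac{1}{N L}\sum_{i=1}^{N}|A_i|.
\]

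The crux is the combinatorial bound on $\sum_i|A_i|$, and here I would track how phases accumulate along the execution of Alg.~\ref{alg:algorit2}. Each node starts with the single default phase $A_i=\{\alpha_i\}$. A genuinely new phase is appended to a node only in the visit step, when an already explored neighbour has an empty intersection of phase sets; the adaptation branch governed by Lemma~\ref{prop:adjphase} deletes $a_{ij},a_{ik}$ while inserting $a_{jk}$ and hence never increases $|A_i|$. Since a given neighbour can trigger at most one such insertion, one gets $|A_i|\leq d_i$. To recover the extra $-1$ per node on average, I would exploit the discovery/alignment step: every node reached with $f(j)=0$ inherits its neighbour's phase via $\alpha_j\leftarrow\alpha_i$, so that this first incident edge is served by the default phase at no additional cost. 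Summed over the network, these shared defaults account for a saving of $N$ phase-slots relative to the trivial count $\sum_i d_i=2M$, yielding $\sum_i|A_i|\leq 2M-N=N(d-1)$.

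Combining the two bounds gives $DC\leq\frac{1}{NL}\,N(d-1)=\frac{d-1}{L}$, and strictness follows because at least one node fails to saturate its bound — for instance the node that initiates the exploration, whose default phase is never overwritten, or any node whose period was enlarged to $n_i>L$. The step I expect to be the main obstacle is precisely the aggregate phase count of the previous paragraph: the interplay of the two flags $f$ and $t$, combined with the phase-merging carried out through Lemma~\ref{prop:adjphase}, makes it delicate to prove that the total saving is \emph{at least} $N$ and that no edge is charged a new phase at both of its endpoints. Carefully handling the initiating node and the degree-one (leaf) boundary cases is what secures the strict inequality, rather than the weaker $DC\leq d/L$ that the naive bound $|A_i|\leq d_i$ alone would give.
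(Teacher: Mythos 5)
Your skeleton is the one the paper uses: lower-bound every period by $L$, upper-bound the total phase count by $\sum_i (d_i-1)=N(d-1)$, and average. Indeed the paper's entire proof is the one-line assertion that the algorithm ``adds at most $d_i-1$ phases per node'' followed by exactly the computation you write down. So the decomposition is not where you differ; the issue is that you try to actually derive the phase-count bound, and that derivation has a genuine gap which you yourself flag as ``the main obstacle'' --- which makes the proposal an outline with its central step open, not a proof.

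Concretely, the gap is the claimed extra saving of $N$. Your bound $|A_i|\leq d_i$ is obtained as: one default phase, plus at most one insertion per neighbour, minus one because the discovery neighbour shares the inherited phase $\alpha_j\leftarrow\alpha_i$ and so never triggers an insertion. When you then argue that ``these shared defaults account for a saving of $N$ phase-slots relative to the trivial count $2M$,'' you are spending the same saving twice: the discovery edge being served for free is precisely what reduced $1+d_i$ to $d_i$; it cannot be reused to push the aggregate down to $2M-N$. To reach $\sum_i|A_i|\leq \sum_i(d_i-1)$ you would need, at essentially every node, a \emph{second} incident edge served without a new phase (or a deletion via the merge of Lemma~\ref{prop:adjphase}), and nothing in Alg.~\ref{alg:algorit2} guarantees this. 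In fact no argument can close this step unconditionally: a leaf node has $|A_i|\geq 1 > 0 = d_i-1$, so on any tree $\sum_i |A_i|\geq N > N-2 = 2M-N$ and your aggregate inequality fails outright (the paper's per-node assertion $|A_i|\leq d_i-1$ fails at leaves for the same reason, and read literally --- ``adds'' $d_i-1$ phases on top of the initial one --- it would only yield $|A_i|\leq d_i$, i.e.\ the bound $DC\leq d/L$ you call naive). Finally, the strict inequality is also not established: your ``some node fails to saturate'' remark names candidates (the initiator, a node with $n_i>L$) without showing that any node with slack must exist.
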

\proof
The algorithm adds at most $d_i-1$ phases per node: $\frac 1N \sum_{i=1}^N \frac{|A_i|}{n_i}\leq \frac 1{NL}\sum_{i=1}^N (d_i-1)=\frac {d-1}{L}$
\endproof

\begin{rem}
Observe that the above algorithm can be used to adapt the wake-up scheduling dynamically in case one or more nodes 
join the network. In particular, it is possible to combine the two algorithms in order to initialize a 
wake-up schedule network-wide with {\tt BFS WAKE-UP}, and use {\tt PARALLEL WAKE-UP} to add new nodes to existing schedules
 or to repair locally a schedule, should a node fail or loose synchronization.
\end{rem}

\subsection{Example}

We reported in Fig.~\ref{fig:example} a simple example 
illustrating the {\tt BFS WAKE-UP} and of the {\tt Parallel WAKE-UP} 
algorithm on a sample graph with $7$ 
nodes. In that instance $U=20$, $L_i=(2,3,9,7,11,5,2)$, 
$\alpha_i=(1,3,7,9,6,1,0)$ and $B=\{2\}$. Several 
wake-up periods coexist: $4,8,16$.

\begin{figure*}[t]
\centering
\begin{minipage}{4cm}
\includegraphics[width=2cm]{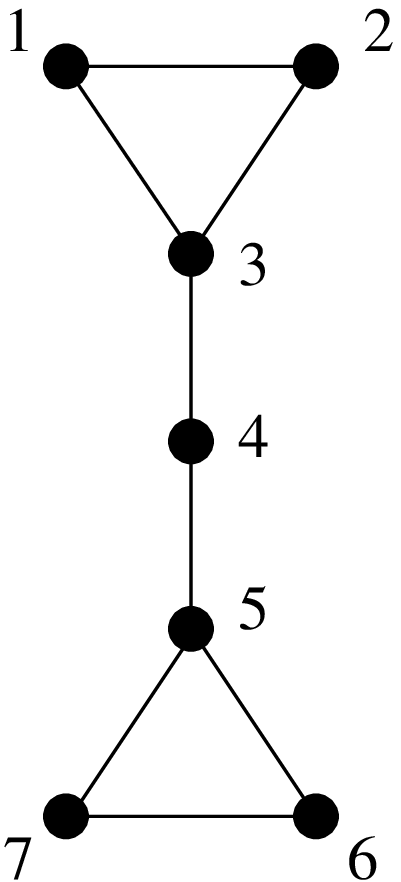}
\end{minipage}
\begin{minipage}{4cm}{%
\begin{tabular}{|l|c|c|c|c|c|c|}
\hline
\textbf{$i$ } & \textbf{$\alpha_i$} & \textbf{$L_i$}& \textbf{$U_i$}& \textbf{$p_i^{q_i}$}&\textbf{$(\alpha_i,n_i)$}& \textbf{$(\alpha_i,n_i)$}\\
&  &   & &  &\textbf{BFS} & \textbf{PARALLEL}\\
\hline
\hline
$1$ & $1$ &$2$ &$20$ &$2$  & $(1,4)$ & $(3,4)$\\
$2$ & $3$ &$3$ &$20$ &$4$  & $(1,4)$ & $(3,4)$\\
$3$ & $7$  &$9$ &$20$ &$16$ & $(1,16)$ & $(3,16)$\\
$4$ & $9$ &$7$ &$20$ & $8$  & $(1,4)$ & $(\{1,3\},4)$\\
$5$ & $6$  &$11$ &$20$ & $16$ & $(1,16)$ & $(1,16)$\\
$6$ & $1$  &$5$ &$20$ &$8$  & $(1,8)$ & $(1,8)$\\
$7$ & $0$  &$2$ &$20$ &$2$  & $(1,8)$ & $(1,8)$\\
\hline
\end{tabular}
}\end{minipage}
\caption{Example. {\tt BFS WAKE-UP} and {\tt PARALLEL WAKE-UP} on a sample topology. The table reports on the final output of the algorithms.}\label{fig:example}
\end{figure*}

The {\tt BFS WAKE-UP} run generates the list of visited 
nodes $(1,2,3,4,5,6,7)$; observe that every node aligned to 
$\alpha_1=1$. In the {\tt Parallel WAKE-UP} node $2$ and $7$ perform 
 the algorithm operations first, then $1$, $5$, $6$ and finally $4$. 
As we can see from the example, {\tt Parallel WAKE-UP} assigns 
heterogeneous phases; at node $4$ phase adaptation according 
to Thm.~\ref{prop:adjphase} is not possible so that finally 
$A_4=\{1,3\}$.

\subsection{Particular cases}\label{sec:particular}

There exists specific cases when strictly tight wake-up schedules are attained easily.
\begin{lem}
Let for any node $i$ be $B|L_i$, $B|U=U_i$ and $L_i\leq U$, then there exists a strictly tight wake-up schedule; it can be attained
using {\tt BFS WAKE-UP}. 
\end{lem}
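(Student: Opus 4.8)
The plan is to show that, under these hypotheses, \texttt{BFS WAKE-UP} outputs a schedule that is simultaneously strictly feasible and delay-tight, and that the whole difficulty concentrates on controlling the least common multiples $[n_i,n_j]$ along the edges.

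First I would pin down the periods produced by Alg.~\ref{alg:algorit0}. Since $B\mid L_i$, the integer $x=L_i$ already factorizes in $B$, so the \texttt{WHILE} loop breaks at its first iteration and returns $n_i=L_i$. In particular every initial period factorizes in $B$ and satisfies $L_i\le n_i=L_i\le U$. Because $|A_i|=1$ throughout \texttt{BFS WAKE-UP} (each $A_i$ stays the singleton $\{\alpha_i\}$), a pair $i,j$ has exactly one residue class of rendezvous points, and by the Chinese Remainder theorem these are spaced $[n_i,n_j]$ apart; hence strict tightness reduces to the two conditions (a) rendezvous exist, i.e.\ $(n_i,n_j)\mid(a_{ij}-a_{ji})$, and (b) $[n_i,n_j]\le U_i=U$ for every $ij\in E$, together with the energy bound $n_i\ge L_i$.

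Next I would dispatch feasibility. The BFS visit assigns $\alpha_j\leftarrow\alpha_i$ to each newly discovered node, so by induction over the BFS tree every node inherits the phase of the root; thus $a_{ij}=a_{ji}$ on every edge and condition (a) holds trivially because $(n_i,n_j)\mid 0$. As $|A_i|=1$ this already yields a \emph{strictly feasible} schedule, and the energy constraint is met since the period is only ever enlarged by the update $n_i\leftarrow[n_i,(n_{i_1},\ldots,n_{i_{d_i}})]$, so $n_i\ge L_i=L_i\,|A_i|$.

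The core of the argument, and the main obstacle, is condition (b). Here I would exploit that the factor basis consists of a single prime $p$ (as in the running example, where $B=\{2\}$): every period is then a power of $p$, so any two periods are comparable under divisibility, giving $[n_i,n_j]=\max\{n_i,n_j\}$ and $(n_i,n_j)=\min\{n_i,n_j\}$. I would then carry the invariant ``every current period is a power of $p$ bounded by $U$'' through the execution: it holds at initialization ($n_i=L_i\le U$), and the update step replaces $n_i$ by $[n_i,(n_{i_1},\ldots,n_{i_{d_i}})]$, which is again a power of $p$ equal to the largest exponent among $n_i$ and its neighbours, hence still $\le U$. By induction over the order in which nodes are finalised, all final periods remain powers of $p$ that are $\le U$, so $[n_i,n_j]=\max\{n_i,n_j\}\le U$ on every edge, which is (b). Combining (a), (b) and the energy bound shows the schedule is strictly tight. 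The delicate point is precisely this lcm bound: the reduction of $[\cdot,\cdot]$ to a maximum fails for a basis with two or more primes (e.g.\ $[4,9]=36$), which is exactly why the divisibility total order induced by a single-prime basis---guaranteed here by $B\mid L_i$ and $B\mid U$---is what makes the construction tight.
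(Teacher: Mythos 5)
Your proof is correct in substance and follows essentially the same route as the paper's own (one-line) proof: \texttt{PERIOD} returns $n_i=L_i$ because $B\mid L_i$; the BFS visit propagates a single common phase, so the Chinese-Remainder solvability condition $(n_i,n_j)\mid(a_{ij}-a_{ji})$ holds trivially; and strict tightness then reduces to the per-edge bound $[n_i,n_j]\le U$. The difference is one of explicitness, and it is instructive. The paper simply asserts ``$[n_i,n_j]\mid U$'' for $n_i=L_i$, $n_j=L_j$ and stops; you isolate that assertion as the crux and justify it via the divisibility total order of prime powers, additionally carrying the invariant ``power of $p$, at most $U$'' through the update step $n_i\leftarrow[n_i,(n_{i_1},\ldots,n_{i_{d_i}})]$ of \texttt{BFS WAKE-UP}, which the paper's proof silently ignores. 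Your observation that the reduction of the lcm to a maximum fails for a richer basis is exactly right, and it exposes a real fragility of the lemma as stated: with $B=\{2,3\}$, $L_i=4$, $L_j=9$, $U=12$ all hypotheses hold, yet $[L_i,L_j]=36\nmid 12$, and \texttt{BFS WAKE-UP} inflates both periods to $36>U$, so the ``attained by \texttt{BFS WAKE-UP}'' claim (and the paper's asserted divisibility) fails, even though a tight schedule exists (take both periods equal to $12$). The one blemish in your write-up is the parenthetical claim that the single-prime structure is ``guaranteed here by $B\mid L_i$ and $B\mid U$'': it is not --- those hypotheses are perfectly compatible with $|B|\ge 2$. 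What they guarantee \emph{only when} $B$ is a single prime is $L_i\mid U$ for all $i$, which is the hypothesis under which both your argument and the paper's are valid; stating the lemma with ``$L_i\mid U$ for all $i$'' (or $|B|=1$) would make both proofs airtight. So your proof is correct exactly in the regime where the paper's own proof is, and it has the merit of making visible the assumption the paper leaves implicit.
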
 
\begin{proof}
Consider any link $ij$ and let $n_i=L_i$ and $n_j=L_j$. The statement follows immediately 
from the Chinese Reminder theorem since $[n_i,n_j]|U$ and one can use 
{\tt BFS WAKE UP} to assign the same phase to all nodes in the network. 
\end{proof}

In particular the following example shows a simple application of the above
case. Consider a tree topology $T$ and assume nodes are labeled in order to respect the partial order induced 
by the distance from the root node $1$. Let $L_0$ be associated to the root node, and $L_k=L_0p^k$ be associated 
to nodes at level $k$. A {\em feasible} schedule can be constructed as follows: root node $0$ wakes at times $x_0=0 \mod L_0$.
Node $j$ wakes up at $x_j=\alpha_0 \mod p^{g(r_j)} L_0 $ where $r_j$ is the level occupied by node $j$ and $g$ is any positive integer 
function. In this case $U=L_0p^{\max g(r_j)}$.





\section{Performance Evaluation}\label{sec:numerical}
\begin{figure}[t]
  \centering
  \subfloat[]{\includegraphics[width=0.45\textwidth]{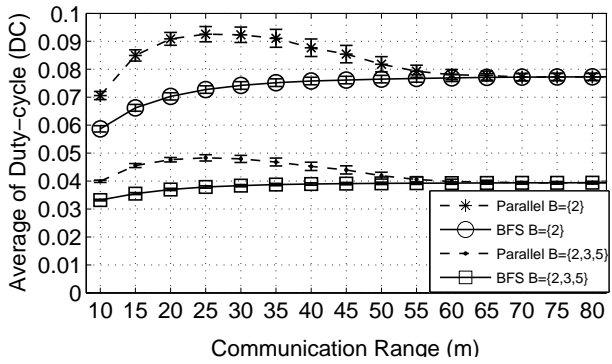}}\quad              
  \subfloat[]{\includegraphics[width=0.45\textwidth]{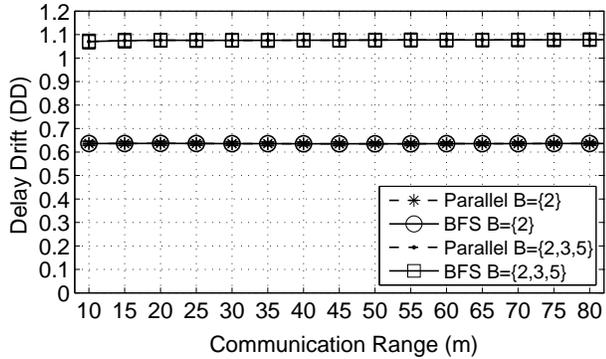}}\\
  \subfloat[]{\includegraphics[width=0.45\textwidth]{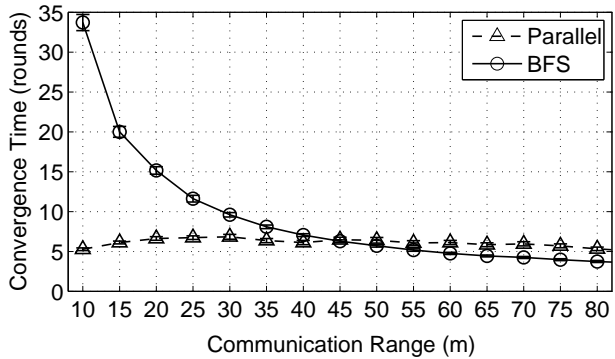}}\quad
\caption{Performance of {\tt BFS WAKE-UP} and {\tt PARALLEL WAKE-UP}: results for $L_i\in [1,35]$ and $U_i\in [50,100]$: 
a) average duty cycle (DC) b) average delay drift (DD) and c) convergence time.}
\label{fig:set_1}
\end{figure}
In this section, we report on the outcomes of experiments for the {\tt BFS WAKE-UP}
and {\tt PARALLEL WAKE-UP} algorithms. In the first set of experiments,
performed using  Matlab$^{\scriptsize \textregistered}$, we tested the algorithms on random topologies.

We considered a square playground of side $100$ m with $200$ nodes
randomly deployed according to a uniform distribution. Nodes were
assumed to be connected if their mutual distance was below a given
value (communication range), varied in the range $10-80$m. For each
setting, $100$ runs were performed. For the construction of the
schedules, two factor bases were considered, $B=\{2\}$ and $B=\{2,3,5\}$.
The constraints were set by taking, for each node, $L_i$ uniformly
distributed in $[1,35]$ and $U_i$ uniformly distributed in $[50,100]$.

We considered as performance metric the average duty cycle, as
introduced in the previous section, and the average normalised delay drift,
defined as:
\begin{eqnarray}
DD=\frac 1{2M} \sum\limits_{i=1}^N \sum\limits_{j \in N_g(i)}\frac{[n_i,n_j]}{U_i}
\end{eqnarray}
It is worth remarking that $DD>1$ implies that some constraints on the
delay were violated. 

Results are reported in Fig.~\ref{fig:set_1}. In terms of duty cycle,
{\tt BFS WAKE-UP} outperforms {\tt PARALLEL WAKE-UP}, as expected, since the second one may add
multiple phases per node. The performance difference reduces in dense
networks (i.e., with large transmission range).  A closer look at the
results showed that, for the parameters chosen, the schedules generated
by {\tt PARALLEL WAKE-UP} were not feasible in less than $1\%$ of the cases.
In terms of average normalised delay drift,
the two algorithms attain the same performance. {\tt BFS WAKE-UP} is much slower in
converging in sparse networks, but its performance increase (and,
actually, outperforms {\tt PARALLEL WAKE-UP }) in very dense networks. As it
may be seen, the use of a richer basis improves
significantly the performance in terms of duty cycle, but it also leads to
worse delay drift performance.

In general, we may conclude that {\tt PARALLEL WAKE-UP} represents a meaningful
choice in most situations, presenting performance close to {\tt BFS} but
with a shorter convergence time.
\begin{table}[t]
\centering
\begin{tabular}{|l|c|c|c|c|}
\hline
\textbf{ Violation    }     & $B$ & $[75,100]$ & $[60,100]$ & $[45,100]$ \\
\hline
\hline
\textbf{Magnitude} &$\{2\}$ & $0$  & $2.04 \pm 0.07$ & $9.32 \pm0.26$\\
\hline
\textbf{Magnitude} & $\{2,3,5\}$  & $0$  & $0$ & $88.38 \pm1.84$\\
\hline
\textbf{Percentage} & $\{2\}$ & $0$  & $1.25\% \pm0.12$& $5.43\% \pm0.27$\\
\hline
\textbf{Percentage} & $\{2,3,5\}$ & $0$  & $0$ & $9.36\% \pm0.43$\\
\hline
\end{tabular}\caption{Performance of {\tt BFS WAKE-UP} and {\tt PARALLEL WAKE-UP}, fraction and magnitude of violations. Settings as in Fig.~\ref{fig:set_1}.}\label{tab:viol}
\end{table}

{\noindent \em Discrete events simulation:} In a second set of experiments, 
we implemented the proposed techniques
in an event-based simulator, Omnet++, to assess the advantage of 
supporting heterogeneous wake-up schedules. We considered
$50$ sensors deployed over a $100m \times 100m$ area, with
communication range of $10m$ and set the time slot duration to $100$ms. We
assumed that packets were generated at each node every $30$ s, and
that all packets were routed towards a sink, located in $(1,1)$. A
tree was constructed at the beginning of the simulation to route
packets to the sink.

As sensors far from the sink relay less traffic, their bounds on
energy consumption can be tightened. We considered a situation in
which the constraints $L_i$ depended on the distance from the sink as follows: 
$L_i=2$ for nodes at distance $\leq 2$ hops, $L_i=4$ for nodes at
distance of $3$ and $4$ hops and so on. As performance metrics we
considered the average packet delay (from source to sink) and the
lifetime of the nodes. The latency was computed up to the time at which the first node
died. We compared the performance attained by {\tt BFS}
({\tt PARALLEL} turned out to offer the same performance) with those
obtained in the absence of sleep mode ('No Power Saving') and those
obtained when all nodes go to sleep synchronously for one slot every
two ('Uniform'). The results are reported, plotted against the distance
from the sink, in Fig.~\ref{fig:set_4}. As it can be seen, {\tt BFS} is able
to effectively prolong the lifetime of sensors far from the sink,
while at the same time limiting the increase in packet latency.
\begin{figure}[t]
  \centering
  \subfloat[]{\includegraphics[width=0.45\textwidth]{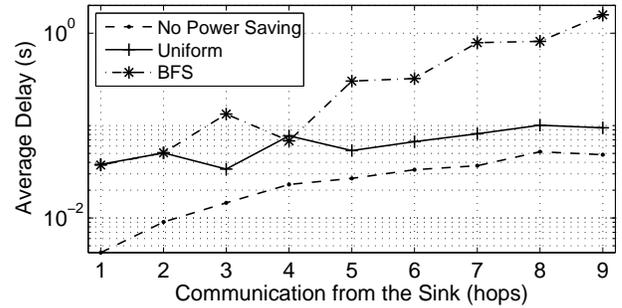}}\quad               
  \subfloat[]{\includegraphics[width=0.45\textwidth]{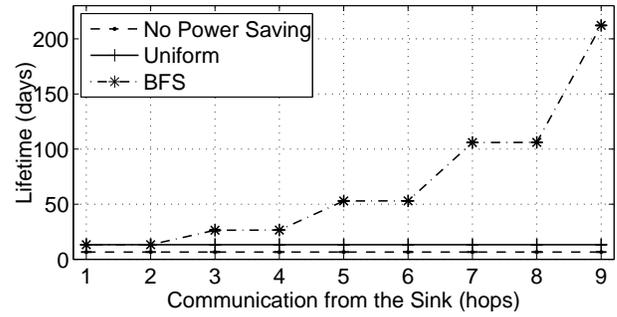}}\quad 
\caption{Discrete event simulation: a) average node lifetime as a function of the distance from the sink, b) 
average delay.}\label{fig:set_4}
\end{figure}




\section{Conclusions}\label{sec:conclusions}
In this paper we introduced a framework for the synchronization of wake-up schedules, which plays 
a central role for energy saving in wireless networks, under the joint request to satisfy per node 
energy and delay constraints. We showed that the related {\tt WAKE-UP} problem has a strong relation 
with integer factorization, which adds a novel algebraic perspective.
 
This work left out several interesting directions. In particular, for relatively small numbers, factorization 
can be done in practice \cite{giblin}. In turn, the optimal choice of the factor basis and of periods 
is a key problem in order to limit the drift of the rendezvous times. 

Furthermore, mapping end to end delay onto bounds on the wake-up period has been only touched briefly in 
App.~\ref{app:boundsdym}. However, while there exist several works in literature tackling the general problem 
of bounding the end-to-end delay \cite{Chlamtac,LeBoudec}, a network calculus accounting 
for power saving duty cycles has not been proposed so far. Relating the {\tt WAKE-UP} problem to those 
works is part of future work.



\bibliographystyle{IEEEtran}
\bibliography{bibliography}


\include{appendices}
\appendix

\section{Proofs}\label{sec:proofs}
\subsection{Proof of  Thm.~\ref{thm:BFS}}

\proof
i. Consider node $i$ and assume that the algorithm started at node $1$ for simplicity: at the end of the algorithm 
it holds $(A_i,n_i)=(\{a_1\},[(n_{i},n_{i_1}),\ldots,(n_{i},n_{i_{N_i}})\})$. Notice that $a_i-a_{j_i}=0$ for 
all $j_i\in N_i$ and also 
\begin{eqnarray}
   ([n_{i},n_{i_1}],\ldots,[n_{i},n_{i_{N_i}}])=[n_{i},(n_{i_1},\ldots,n_{i_{N_i}})]
\end{eqnarray}
Hence, every pair $(a_i,n_i)$ solves the modular congruence system according to the 
Chinese Remainder theorem. 

ii. For each node $j$, $\Ng{j}$ operations are performed, including the calculation of 
$[n_,n_j]$ , which requires on $O(\log^3 \max(n_i,n_j))$ operations. The overall complexity of the algorithm 
is then bounded as $\sum_{i=1}^{N} \sum_{j\in \Ng{i} } \log^3 \max(n_i,n_j) =2E \log^3 U$

iii. The statement follows immediately by a double counting argument since every node $i$ needs to 
exchange at most $2$ messages with each neighbor in order to visit neighbors, communicate its phase 
and receive acknowledgment.
\endproof

\subsection{Proof of  Lemma ~\ref{prop:adjphase}}

\proof
A novel phase $a$ can be given to node $j$ iff the following system solves
\begin{eqnarray}\label{eq:system}
x&&= a_{ij} \mod n_i = a \mod n_j \nonumber \\
 &&= a_{kj} \mod n_k 
\end{eqnarray}
from which $(n_k,n_j)|a_{kj}-a$ and $(n_i,n_j)|a-a_{ij}$. Thus, if a novel phase can be assigned
then a solution exists for
\[
            c=a_{kj} - a_{ij}=x \, (n_j,n_k) +y \, (n_i,n_j),\quad x,y \in \Z
\]
From (\ref{eq:system}), $d=(n_i,(n_j,n_k))|a_{kj} - a_{ij}$: let $d e =a_{kj} - a_{ij}$ consider a solution $x_0,y_0$
\[
d=(n_j,n_k) x_0 + (n_i,n_j) y_0 ,\quad x_0,y_0 \in \Z
\]
then 
\[
c=de= (n_j,n_k)e x_0 + (n_i,n_j)e y_0 
\]
the novel phase writes 
\[
a=a_{kj}-(n_j,n_k) e x_0
\]
Observe that $a-a_{ij}=a_{kj}-(n_j,n_k) e x_0-a_{ij}=(n_i,n_j)e y_0$ which concludes the proof.
\endproof

\subsection{Continuous--Time Analogue}\label{sec:continuous}
Here we consider wake-up scheduling in continuous time, under the 
additional requirement that the difference of phase at different nodes is rational (observe that this is a practical 
requirement since numerical representation is rational). We conclude that the continuous 
time approach plus the request for a rational difference of phase at two nodes is equivalent to 
operate wake-up scheduling in the discrete time domain. In fact, consider the 
general case of a network where wake-up scheduling is operated and nodes are assigned 
phases $\alpha_i$ and period $T_i$. It holds for any node $i$:
\[
K_{ij} T_i+\alpha_{i}=K_{ji} T_j+\alpha_{j}, \; j\in \Ng{i}
\]   
Now, assume $G$ is connected, and consider $N$ paths $P_1,P_2,\ldots,P_N$ from node $1$ to nodes $j=1,\ldots,N$, 
respectively: it is easy to see that the following set of equalities hold
\[
T_j/T_1=q_{j1}+\sum_{rs\in P_j} p_{rs} \alpha_{rs}
\]
where $\alpha_{rs}= \alpha_{r} -\alpha_{s}$, whereas $q_{j1}$ and $p_{rs}$ are rational. Thus, it follows that also $T_j/T_1$ is 
rational. This is equivalent to admit that the overall system time, i.e., the period of each node $T_i$ can be reduced to a 
convenient multiple of a certain $T_0=T_1/N_0$.

\subsection{Standard isomorphism}\label{sec:isomorphism}
A basic result in algebra that provides a natural link of the Chinese Remainder theorem with the problem of factorization 
of integers. With standard notation denote $\ZnZ{n}$ the quotient ring of integers modulus $n$ \cite{langu}. 
Define the standard isomorphism
\begin{eqnarray}
\Psi:& \ZnZ{n} \times \ZnZ{m} \longrightarrow \ZnZ{[m,n]}\nonumber\\
     &   (a,b) \rightarrow c     \nonumber
\end{eqnarray}
where $c$ solves for 
\begin{eqnarray}
c &= a \mod n \nonumber \\
c &= b \mod m \nonumber
\end{eqnarray}
Assume now that $m,n$ are two coprime integers, i.e.,  $(m,n)=1$; it is easy to see that $\Psi$ is an 
isomorphism. Translating with the terminology used in wake-up scheduling used throughout this paper, 
this means that for every wake-up schedule $\{(a,n),(b,m)\}$, there exists a unique rendezvous time $c_0$ such
 that $c=c_0 \mod [m,n]$ for any other rendezvous time $c$ generated by the schedule. Viceversa, given a non-empty 
set of rendezvous times for two nodes having coprime periods, there exists a {\em unique} pair of phases $a$ and $b$ 
that corresponds to such set. This also means that phases are automatically determined by the set of rendezvous times 
and the coprime periods, i.e., the only information needed is that $mn=[m,n]$, i.e., $(m,n)=1$. This provides another 
interpretation of the reason why {\tt WAKE-UP} is as hard as {\tt INTEGER FACTORIZATION}, as proved in Sec.~\ref{sec:complexity}.

\subsection{On the Dimensioning of the Upper Bounds}\label{app:boundsdym}
In general, application--level constraints for the application setting we target are expected to be given in terms of:
\begin{itemize}
\item {\it Expected lifetime of the network.} This can be translated in terms of an upper bound on the duty cycle of single nodes. This is reflected in a constraint of the form like (\ref{eq:constr_lower}).
\item {\it End-to-end delay.} Bounds will be given in terms of the delay incurred between the generation of a message at a node (which corresponds, in our framework, to the reading of a sensor) and the delivery at the appropriate sink, where such data may be processed or transferred, via some other communications technology, to a remote processing center.
\end{itemize}
For the latter case, we need to identify means for translating end-to-end delay constraints into node-level delay constraints of the form (\ref{eq:constr_upper}). The problem is then to translate a number of constraints, given on routes, to constraints on the single links constituting the routes. 

In order to do so, we start by introducing some notation. We define by $R_i$ the route starting from node $i$ and ending in one of the sink nodes. We say that node $j$ lies along the route $R_i$ if all traffic originating from $i$ passes through $j$ in order to reach the sink, and write $j \in R_i$. The length (in hops) of route $R(i)$ is denoted by $|R(i)|$. Conversely, we denote by $\mathcal{F}_i$ the set of routes passing through node i. 
The worst-case delay associated to route $R_i$ is denoted by $D(R_i)$. 
The delay constraint on route $R_i$ is expressed as
$D(R_i)\leq \eta_i$, 
where $\eta_i$ is a constant.

The difficulty of the problem lies in the fact that the constraint on the energy expenditure (\ref{eq:constr_lower}) impacts the minimal latency experienced by a message passing through a given node. 

We assume that (i) routes are static throughout the lifetime of the network (ii) nodes arriving during an active period are sent at the next active period occurrence (iii) the duration of a time slot is sufficient for transmitting all messages queued at any given node. 

Now we consider the latency experienced at a given node and imposed by the duty cycle. Under the aforementioned assumptions, the worst-case delay incurred by a message passing through node $i$ can be lower bounded by $\frac{n(i)}{|A_i|}+1$. The term 'worst-case' here refers to two factors. One, to the arrival time of the message, whereby the worst case coincides with a message arriving right after the beginning of an active slot. Second, to the rendez-vous points. In order to be able to move the message one hop further along the route, it is not sufficient for node $i$ to be in the active state, but, given the system model presented in Sec.~\ref{sec:model}, it requires also the next hop node to be in the active state. In general, the delay will be a multiple of $\frac{n(i)}{|A_i|}$ plus the time slot necessary to forward the message. \footnote{It is worth remarking that the bound is exact if $|A_i|=1$. For $|A_i|>1$ it is an approximation of the actual one, derived assuming that active slots are uniformly distributed within the wake-up period.}
Given condition (\ref{eq:constr_lower}), this implies that such worst-case delay is lower bounded by $L_i+1$. 

Clearly, a set of feasible $U_i$ has to satisfy $U_i\geq L_i+1$ for all nodes $i$. Furthermore, in order to be compatible with the end-to-end delay constraints it has to be $\sum_{j \in R_i} U_j \leq \eta_i$ for all routes $R_i$.

We thus need to find a solution of the following system of inequalities:
\begin{equation}\label{eq:systeme2e}
\left\{
\begin{array}{l}
\sum\limits_{j \in R_1} U_j \leq \eta_1,\\
\vdots\\
\sum\limits_{j \in R_N} U_j \leq \eta_N,\\
U_1 \geq 1+L_1,\\
\vdots\\
U_N \geq 1+L_N.
\end{array}
\right.
\end{equation}

The system has a solution if and only if the following condition is satisfied:
\begin{equation}\label{eq:e2econd}
|R_i| + \sum_{j \in R_i}L_j \leq \eta_i\,\, \forall i=1,\ldots,N.
\end{equation}

If such a condition is respected, a feasible solution can be found as follows. We associate to route $R_i$ a ``delay budget'' $\eta_i$. We consider how much is consumed by the duty cycle at the various nodes along the route $R_i$, which, using the bound outlined above, will be taken as $|R_i| + \sum_{j \in R_i}L_j $. We then distribute the remaining ``delay budget'' in a uniform manner across all nodes of the route $R_i$. We do this for all routes. Then, for each node $i$, we consider as $U_i$ the minimum value of such quantity among all routes passing through it. In symbols:
\begin{equation}\label{eq:uniform}
U_i=L_i+1+\min\limits_{R_j \in \mathcal{F}_i}\left[ \frac{\eta_j-|R_j|-\sum\limits_{k\in R_j}L_k}{|R_j|}\right].
\end{equation}

It is trivial to prove that under (\ref{eq:e2econd}) solutions of the form (\ref{eq:uniform}) satisfy $U_i\geq L_i+1 \,\, \forall i$. In order to prove that the other conditions in (\ref{eq:systeme2e}) are satisfied we note that:
\begin{align}
\sum\limits_{i\in R_j} U_i=\sum\limits_{i \in R_j} \left\{ L_i + 1 + \min\limits_{R_k \in \mathcal{F}_i}\left[ \frac{\eta_k-|R_k|-\sum\limits_{h \in R_k}L_h}{|R_k|}\right]\right\}\leq\\
\leq \sum\limits_{i \in R_j} L_i + |R_j| + \sum\limits_{i \in R_j}  \frac{\eta_j-|R_j|-\sum\limits_{h \in R_j}L_h}{|R_j|}=\\
=\sum\limits_{i \in R_j} L_i + |R_j| + \eta_j - |R_j| - \sum\limits_{h \in R_j} L_h =\eta_j,
\end{align}
where we used the fact that if $i \in R_j$ then $R_j \in \mathcal{F}_i$ and that $\min\limits_{i \in S} \{x_i\} \leq x_j \,\, \forall j \in S$.

We recall that the method outlined above provides only {\it one} of the possible solutions of the system (\ref{eq:systeme2e}). In particular, there is no reason for which a uniform distribution of the remaining delay budget along a route is optimal, though this seems reasonable in a variety of settings. 

For the special case in which $L_i=L \, \forall i$ and $\eta_i=\eta \, \forall i$, i.e., all nodes have the same energy constraint and all messages should be delivered to a sink within the same deadline, regardless of their distance, (~\ref{eq:uniform}) simplifies as:
\begin{align}
U_i=L +1+\min\limits_{R_j \in \mathcal{F}_i}\left[ \frac{\eta-|R_j|-\sum_{k\in R_j}L
}{|R_j|}\right]=\\
=L+1+\min\limits_{R_j \in \mathcal{F}_i}\left[ \frac{\eta-|R_j|\cdot (L+1)
}{|R_j|}\right]=\\
=\frac{\eta}{\max\limits_{R_j \in \mathcal{F}_i} |R_j|}
\end{align}

In the uniform case, therefore, the bounds $U_i$ depend only on the maximal length of the routes passing through node $i$.


\end{document}